\newcommand{\mbs}[1]{\bm{#1}}
\newcommand{\vect}[1]{{\lowercase{\mbs{#1}}}}
\newcommand{\mat}[1]{{\uppercase{\mbs{#1}}}}
\newcommand{\T}{{\scriptscriptstyle\mathsf{T}}}
\renewcommand{\H}{{\scriptscriptstyle\mathsf{H}}}
\newcommand{\cond}{\,\vert\,}
\renewcommand{\Re}[1][]{\ifthenelse{\isempty{#1}}{\operatorname{Re}}{\operatorname{Re}\left(#1\right)}}
\renewcommand{\Im}[1][]{\ifthenelse{\isempty{#1}}{\operatorname{Im}}{\operatorname{Im}\left(#1\right)}}
\newcommand{\bv}{\vect{b}}
\newcommand{\gv}{\vect{g}}
\newcommand{\hv}{\vect{h}}
\newcommand{\sv}{\vect{s}}
\newcommand{\uv}{\vect{u}}
\newcommand{\vv}{\vect{v}}
\newcommand{\xv}{\vect{x}}
\newcommand{\yv}{\vect{y}}
\newcommand{\zv}{\vect{z}}
\newcommand{\nuv}{\vect{\nu}}
\newcommand{\psiv}{\vect{\psi}}
\newcommand{\etav}{\vect{\eta}}
\newcommand{\Thetam}{\pmb{\Theta}}
\newcommand{\Omegam}{\pmb{\Omega}}
\newcommand{\Lambdam}{\pmb{\Lambda}}
\newcommand{\Phim}{\pmb{\Phi}}
\newcommand{\Psim}{\pmb{\Psi}}
\newcommand{\Am}{\mat{a}}
\newcommand{\Bm}{\mat{b}}
\newcommand{\Cm}{\mat{c}}
\newcommand{\Dm}{\mat{d}}
\newcommand{\Km}{\mat{k}}
\newcommand{\Qm}{\mat{q}}
\newcommand{\Sm}{\mat{s}}
\newcommand{\Vm}{\mat{V}}
\newcommand{\Cc}{{\mathcal C}}
\newcommand{\Sc}{{\mathcal S}}
\newcommand{\Xc}{{\mathcal X}}
\newcommand{\CC}{\mathbb{C}}
\newcommand{\Id}{\mat{\mathrm{I}}}
\newcommand{\CN}[1][]{\ifthenelse{\isempty{#1}}{\mathcal{N}_{\mathbb{C}}}{\mathcal{N}_{\mathbb{C}}\left(#1\right)}}
\renewcommand{\P}[1][]{\ifthenelse{\isempty{#1}}{\mathbb{P}}{\mathbb{P}\left(#1\right)}}
\newcommand{\E}[1][]{\ifthenelse{\isempty{#1}}{\mathbb{E}}{\mathbb{E}\left(#1\right)}}
\renewcommand{\det}[1][]{\ifthenelse{\isempty{#1}}{\text{det}}{\text{det}\left(#1\right)}}
\newcommand{\trace}[1][]{\ifthenelse{\isempty{#1}}{\text{tr}}{\text{tr}\left(#1\right)}}
\newcommand{\rank}[1][]{\ifthenelse{\isempty{#1}}{\text{rank}}{\text{rank}\left(#1\right)}}
\newcommand{\diag}[1][]{\ifthenelse{\isempty{#1}}{\text{diag}}{\text{diag}\left(#1\right)}}
\DeclarePairedDelimiter\abs{\lvert}{\rvert}
\DeclarePairedDelimiter\Abs{\lvert}{\rvert^2}
\DeclarePairedDelimiter\norm{\lVert}{\rVert}
\DeclarePairedDelimiter\Norm{\lVert}{\rVert^2}
\newcommand{\defeq}{\triangleq}
\newtheorem{proposition}{Proposition}
\newtheorem{remark}{Remark}[section]
\newtheorem{definition}{Definition}
\newtheorem{theorem}{Theorem}
\newtheorem{corollary}{Corollary}
\newtheorem{lemma}{Lemma}
\newtheorem{assumption}{Assumption}
\newcommand{\Rmimo}{R_{\text{mimo}}}
\newcommand{\dmimo}{d_{\text{mimo}}}
\newcommand{\Rmimoi}{R_{\text{mimo},1}}
\newcommand{\Rmimoj}{R_{\text{mimo},2}}
\newcommand{\Rmimok}{R_{\text{mimo},k}}
\newcommand{\Retav}{R_{\etav}}
\newcommand{\Qu}{\Qm_{\uv}}
\newcommand{\Qv}{\Qm_{\vv}}
\newcommand{\Psimg}{\Psim_{\hat{\gv}}}
\newcommand{\Psimgp}{\Psim_{\hat{\gv}^{\perp}}}
\newcommand{\Psimh}{\Psim_{\hat{\hv}}}
\newcommand{\Psimhp}{\Psim_{\hat{\hv}^{\perp}}}
\newcommand{\noisey}{\varepsilon}
\newcommand{\noisez}{\omega}
\newcommand{\Noisey}{E}
\newcommand{\Noisez}{\Omega}
\begin{document}

\title{Degrees of Freedom of Time Correlated MISO Broadcast Channel with Delayed CSIT}
\author{
\thanks{Manuscript submitted to IEEE Transactions on Information Theory
in March 2012, revised in August 2012.}
Sheng~Yang,~\IEEEmembership{Member,~IEEE,} Mari~Kobayashi,~\IEEEmembership{Member,~IEEE,} 
\thanks{S. Yang and M. Kobayashi are with the Telecommunications
department of SUPELEC, 3 rue Joliot-Curie, 91190 Gif-sur-Yvette,
France.~(e-mail: \texttt{\{sheng.yang, mari.kobayashi\}@supelec.fr})} \\
David~Gesbert,~\IEEEmembership{Fellow,~IEEE,} 
Xinping~Yi,~\IEEEmembership{Student Member,~IEEE} 
\thanks{D. Gesbert and X. Yi are with EURECOM,
Sophia-Antipolis, France.~(e-mail: \texttt{\{david.gesbert, xinping.yi\}@eurecom.fr})}
\thanks{This work was partially supported by HIATUS and the ANR project
FIREFLIES~(ANR-10-INTB-0302). The project HIATUS acknowledges the financial support of the Future and Emerging Technologies (FET) programme within the Seventh Framework Programme for Research of the European Commission under FET-Open grant number: 265578.}
\thanks{
Parts of the results have been presented in IEEE International Symposium
on Information Theory, Boston, USA, July 2012.}}

\maketitle

\begin{abstract}
We consider the time correlated multiple-input single-output~(MISO)
broadcast channel where the transmitter has imperfect knowledge of the
current channel state, in addition to delayed channel state information.
By representing the quality of the current channel state information as
$P^{-\alpha}$ for the signal-to-noise ratio $P$ and some constant
$\alpha\geq 0$, we characterize the optimal degree of freedom region for
this more general two-user MISO broadcast correlated channel.  The
essential ingredients of the proposed scheme lie in the quantization and
multicast of the overheard interferences, while broadcasting new private
messages. Our proposed scheme smoothly bridges between the scheme
recently proposed by Maddah-Ali and Tse with no current state
information and a simple zero-forcing beamforming with perfect current
state information.
\end{abstract}

\section{Introduction}

In most practical scenarios, perfect channel state information at
transmitter (CSIT) may not be available due to the time-varying nature
of wireless channels as well as the limited resource for channel
estimation. However, many wireless applications must guarantee high-data
rate and reliable communication in the presence of channel uncertainty.
In this paper, we consider such a scenario in the context of the two-user
multiple-input single-output~(MISO) broadcast channel, where the
transmitter equipped with $m$ antennas~($m\ge2$) wishes to send two private messages to two receivers each with a single
antenna.  The discrete time signal model is given by 
\begin{subequations}
  \begin{align}
    y_t &= \hv_t^\H \xv_t + \noisey_t, \label{eq:signal-model-a}\\
    z_t &= \gv_t^\H \xv_t + \noisez_t, \label{eq:signal-model-b}
  \end{align}
\end{subequations}
for any time instant $t$, where $\hv_t,\gv_t \in \CC^{m\times 1}$ are
the channel vectors for user~1 and user~2, respectively; $\noisey_t, \noisez_t \sim
\CN[0,1]$ are normalized additive white Gaussian noises~(AWGN) at the
respective receivers; the input signal $\xv_t$ is subject to the power
constraint $\E\bigl( \norm{\xv_t}^2 \bigr) \le P$, $\forall\,t$.  

For the case of perfect CSIT, the optimal degrees of freedom~(DoF) of this
channel is two and achieved by linear strategies such as
zero-forcing~(ZF) beamforming. When the transmitter suffers from
constant inaccuracy of channel estimation, it has been shown in
\cite{Wigger} that the degrees of freedom per user is
upper-bounded by $\frac{2}{3}$, whereas the highest known achievable DoF
value, also conjectured to be optimal, is only $\frac{1}{2}$. 
It is also well known that the full
multiplexing gain can be maintained under imperfect CSIT if the error in
CSIT decreases as $O(P^{-1})$ as $P$ grows
\cite{caire2010multiuser}.  Moreover, for the case of the temporally
correlated fading channel such that the transmitter can predict the
current state with error decaying as $O(P^{-\alpha})$ for some constant
$\alpha\in[0,1]$, ZF can only achieve a fraction $\alpha$ of the optimal
degrees of freedom \cite{caire2010multiuser}.  This result somehow
reveals the bottleneck of a family of precoding schemes relying only on
instantaneous CSIT as the temporal correlation decreases
($\alpha\rightarrow 0$). Recently, a breakthrough has been made in order
to overcome this problem. In \cite{maddah2010degrees}, Maddah-Ali and
Tse showed a surprising result that even completely outdated CSIT can be
very useful in terms of degrees of freedom, as long as it is accurate.
For a system with $m\ge2$ antennas and two users, the proposed scheme in
\cite{maddah2010degrees}, hereafter called MAT, achieves the
multiplexing gain of $\frac{2}{3}$ per user, irrespectively of the
temporal correlation. 
The role of perfect
delayed CSIT can be re-interpreted as a feedback of the past
signal/interference heard by the receivers. This side information
enables the transmitter to perform ``retrospective'' alignment in the
space \emph{and} time domain, as demonstrated in different multiuser
network systems (see \cite{FandT_Jafar} and the references therein).  Despite its DoF
optimality, the MAT scheme is designed assuming the worst case scenario
where the delayed channel feedback provides no information about the
current channel state. This assumption is over pessimistic as most practical
channels exhibit some form of temporal correlation.  In fact, it readily
follows that the selection strategy between ZF and MAT yields the degrees
of freedom of $\max\{\alpha, \frac{2}{3}\}$ for $\alpha\in[0,1]$.  For
either quasi-static fading channel~($\alpha \ge 1$) or very fast
channels~($\alpha \to 0$), a selection approach is reasonable. However,
for intermediate ranges of temporal correlation ($0<\alpha<1$), a
fundamental question arises as to whether a better way of exploiting
both delayed CSIT and current (imperfect) CSIT exists.  Studying the
DoF under such a CSIT assumption is of practical and
theoretical interest.

The main contributions of this work are summarized in the following.
First, we establish an outer bound on the DoF region of the two-user broadcast
channel with perfect delayed and imperfect current state information. To
that end, we use two powerful tools: the genie-aided model and the
extremal inequality~\cite{LiuViswanath,Weingarten}. 
Then, we propose a
novel scheme that optimally combines the ZF spatial precoding, based on the imperfect 
current state information, and the MAT space-time alignment, based on the perfect 
past state information. The key of this scheme is the digital
transmission of the overheard interference, which replaces the analog
one initially considered in the MAT alignment~\cite{maddah2010degrees}. The role of
spatial precoding, exploiting current CSIT, is two-fold: 
\begin{itemize}
  \item It enables to reduce the power of overheard interferences in the MAT alignment. This 
    power reduction then saves, via source compression/quantization,
    the resource related to the transmission of the overheard
    interferences.
  \item It allows for the parallel transmission of two private messages
    on top of the multicast of overheard interferences as common message. 
\end{itemize}
It will be shown that the proposed scheme achieves the upper bound of
the symmetric DoF  
\begin{align}
  d_{\text{sym}} = \frac{2+\alpha}{3}, \quad\alpha\in[0,1]
\end{align}%
given by the converse. To achieve the other corner points of the region,
we show that delayed CSIT is not necessary and the optimal strategy is
a combination of rate-splitting, spatial precoding with imperfect
current CSI, and superposition coding. 
Specifically, we split one of the users' message into two parts and
broadcast one part of it as common message. The other part and the
message of the other user are then superimposed over the common message
and broadcast with spatial precoding. 
As an extension to the main result, we derive the optimal DoF region of
the same channel with common message. Another extension is the
achievable DoF region when only imperfect delayed CSIT is
available~(e.g., due to limited feedback rates). 
Finally, in addition to the results on the optimal DoF region, we provide the exact achievable rate regions of the proposed schemes in the appendix. 

At the time of submission, a parallel independent work~\cite{Gou2012}
was brought to our attention which also builds on our initial results
reported in \cite{SubmittedISIT}. In \cite{Gou2012}, the authors consider an i.i.d.
fading model in which the transmitter knows perfectly the past channel
states and imperfectly the current channel state. Their achievability
proof coincides with our optimal scheme, while the outer bound is
derived differently by establishing an equivalent compound channel.  
It is worth noting that the outer bound technique developed in
\cite{Gou2012} does not rely on any essential statistical equivalence of
the two users' channel vector directions, which is stronger than both
the original result of \cite{maddah2010degrees} as well as the result in
 this work~(that exploits the isotropic property of the estimation
 error). On the other hand, our model allows temporal correlations of
 the channel coefficients and is therefore stronger than both the
 original result~\cite{maddah2010degrees} and \cite{Gou2012} in that
 sense. 
Thus, while both \cite{Gou2012} and the current work generalize \cite{maddah2010degrees}, neither subsumes the other.  

The rest of the paper is organized as follows. In
Section~\ref{sec:model}, after presenting the assumptions and some basic
definitions of our model, we provide our main theorem on the optimal DoF
region. The above contributions are then presented in order. Finally, we
conclude the paper in Section~\ref{sec:conclusions}. Detailed proofs are deferred
to the appendix. 
   
Throughout the paper, we will use the following notations. 
Matrix transpose, Hermitian
transpose, inverse, and determinant are denoted by $\Am^\T$,
$\Am^{\H}$, $\Am^{-1}$, and $\det[\Am]$, respectively. 
$\xv^{\perp}$ is any nonzero vector such that $\xv^\H \xv^\perp = 0$. 
Logarithm is in base $2$. Partial ordering of Hermitian matrices
is denoted by $\succeq$ and $\preceq$, i.e., $\Am\succeq\Bm$ means
$\Am-\Bm$ is positive semidefinite. We use $\Psim_{\xv}$ to denote a
projection matrix on the direction given by $\xv$, i.e., $\Psim_{\xv} \defeq
\displaystyle \frac{\xv \xv^\H}{\norm{\xv}^2}$.

\section{System Model and Main Results}
\label{sec:model}

The signal model of this paper is defined by \eqref{eq:signal-model-a}
and \eqref{eq:signal-model-b}. 
For convenience, we provide the following definition.
\begin{definition}[channel states]
  The channel vectors $\hv_t$ and $\gv_t$ are called the states of the
  channel at instant $t$. For simplicity, we also define the state
  matrix $\Sm_t$ as $\Sm_t \defeq \left[ \begin{smallmatrix}\hv_t^\H \\ \gv_t^\H
  \end{smallmatrix} \right] \in \mathcal{S}$ where $\mathcal{S}$ is the
  set of all possible states.
\end{definition}
The assumptions on the knowledge of the channel
states and the fading process are summarized as follows. 

\begin{assumption}[perfect delayed and imperfect current CSI]\label{assumption:CSI}
At each time instant $t$, the transmitter knows the delayed channel states up to
instant $t-1$. In addition, the transmitter can
somehow obtain an estimate $\hat{\Sm}_{t} \in \hat{\mathcal{S}}$ of the current channel state $\Sm_t$, i.e., $\hat{\hv}_t$
and $\hat{\gv}_t$ are available to the transmitter with
\begin{align}
  \hv_t &= \hat{\hv}_t + \tilde{\hv}_{t}, \\
  \gv_t &= \hat{\gv}_t + \tilde{\gv}_{t}  
\end{align}
where the estimate $\hat{\hv}_t$~(also $\hat{\gv}_t$) and 
estimation error $\tilde{\hv}_{t}$~(also $\tilde{\gv}_{t}$) are
uncorrelated and both assumed to be zero mean with covariance
$(1-\sigma^2) \Id_m$ and $\sigma^2 \Id_m$, respectively, with $\sigma^2
\le 1$. The receivers know perfectly all states $\bigl\{\Sm_t\bigr\}$ and
$\bigl\{\hat{\Sm}_t\bigr\}$. 
\end{assumption}


\begin{assumption}[fading process]
  \label{assumption:fading}
  The processes $\bigl\{ \hat{\Sm}_t \bigr\}$, $\bigl\{ \tilde{\Sm}_t
  \bigr\}$, and thus $\bigl\{ {\Sm}_t \bigr\}$ are stationary and
  ergodic. Moreover, for any time instant $t$, we assume the following: 
  \begin{enumerate}
    \item $\rank[\Sm_t] = 2$ with probability
      $1$ and $\E\bigl( \log\det[\Sm_t\Sm_t^\H] \bigr) > -\infty$. 
    \item We have the Markov chain  
      \begin{equation} 
        (\hat{\Sm}^{t-1}, {\Sm}^{t-1}) \leftrightarrow \hat{\Sm}_{t} \leftrightarrow {\Sm}_{t}.  \label{eq:Markov} 
      \end{equation}
    \item The estimation error is isotropic, i.e., the distributions of
      $\tilde{\hv}_t$ and $\tilde{\gv}_t$ conditional on $\hat{\Sm}_t$
      are invariant under unitary transformations. Furthermore, for any
      $\sigma^2>0$, 
      $\E_{\tilde{S}_{i} \vert \hat{S}_i} \Bigl( \log \frac{\Abs{\tilde{h}_{t,i}}}{\sigma^2} \Bigr)$ and $\E_{\tilde{S}_{i} \vert \hat{S}_i} \Bigl( \log \frac{\Abs{\tilde{g}_{t,i}}}{\sigma^2} \Bigr)$, $i=1,\ldots,m$, are finite. 
  \end{enumerate}
\end{assumption}
Note that when $\bigl\{ \hat{\Sm}_t \bigr\}$ and $\bigl\{ \tilde{\Sm}_t
\bigr\}$ are independent Rayleigh fading processes with independent and
identically distributed~(i.i.d.) entries, all the above
assumptions are verified.  Without loss of generality, we implicitly
assume that $\sigma^2>0$ in the rest of the paper.
The case with $\sigma^2 = 0$ corresponds to the case of perfect CSIT, in
which the capacity region is already known. Then, we can introduce
a parameter $\alpha_P\ge0$ as the power exponent of the estimation error
\begin{align}
  \alpha_P \defeq -\frac{\log(\sigma^2)}{\log P}. \label{eq:alpha}
\end{align}
The parameter $\alpha_P$ can be regarded as the quality of the
current CSIT in the high SNR regime. Note that $\alpha_P=0$ corresponds
to the case with no current CSIT at all, while $\alpha_P\to\infty$ corresponds to the case with perfect current
CSIT. In addition, we assume that $\displaystyle
\lim_{P\to\infty} \alpha_P$ exists and define
\begin{align}
\alpha \defeq \lim_{P\to\infty} \alpha_P.  
\end{align}%
Hereafter, we use $\alpha$ instead of $\alpha_P$, whenever no confusion
is likely. In addition, since $\alpha>1$ implies that the
estimation noise is negligible as compared to the AWGN and can be
regarded as perfect from the DoF perspective, we assume implicitly that the value of
$\alpha>1$ is truncated at $1$ wherever applicable.
Connections between the above model and practical time correlated models
are highlighted in Section~\ref{sec:extension}. 

\begin{definition}[achievable degrees of freedom]
  A code for the two-user Gaussian MISO broadcast channel with delayed CSIT and
  imperfect current CSIT is defined as follows: 
  \begin{itemize}
\item A sequence of encoders at time $t$ is given by 
  $F_t: \mathcal{W}_1 \times \mathcal{W}_2 \times \mathcal{S}^{t-1}
  \times \hat{\mathcal{S}}^{t} \longmapsto \CC^{m}$ where the messages
  $W_1$ and $W_2$ are uniformly distributed over the message sets $\mathcal{W}_1$
  and $\mathcal{W}_2$, respectively.
\item A decoder for user~$k$ is given by the mapping $\hat{W}_k:
  \mathbb{C}^{1 \times n} \times \mathcal{S}^{n} \times
  \hat{\mathcal{S}}^{n}  \longmapsto \mathcal{W}_k$, $k=1,2$.
\end{itemize}
The DoF pair $(d_1, d_2)$ is said to be \emph{achievable} if there exists a code
that simultaneously satisfies the reliability condition 
\begin{align}
\limsup\limits_{n\rightarrow \infty} & \Pr\bigl\{ W_k\neq \hat{W}_k
\bigr\}=
0,\label{eq:reliability} \\
\shortintertext{and has a pre-log factor of the rate}
\lim_{P\rightarrow \infty} \liminf_{n\rightarrow \infty} & \frac{
\log_2 |\mathcal{W}_k(n, P)|}{ n \log_2 P}\geq d_k, \quad k=1,2.  \label{eq:DoF} 
\end{align}
The union of all achievable DoF pairs is then called the optimal DoF 
region of the Gaussian MISO broadcast channel. 
\end{definition}

The main result of this paper is stated below. 
\begin{theorem} \label{theorem:DoF}
  The optimal degrees of freedom region of the two-user Gaussian MISO
  broadcast channel with perfect delayed and imperfect current CSIT is
  characterized by 
  \begin{subequations}
  \begin{align}
    d_1 &\le 1, \label{eq:d1}\\
    d_2 &\le 1, \label{eq:d2}\\
    d_1 + 2 d_2 &\le 2 + \alpha, \label{eq:d1+2d2}\\
    2 d_1 + d_2 &\le 2 + \alpha. \label{eq:2d1+d2} 
  \end{align}%
\label{eq:region}
  \end{subequations}
\end{theorem}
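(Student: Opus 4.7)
The plan is to prove Theorem~\ref{theorem:DoF} by matching converse and achievability. The single-user bounds $d_1\le 1$ and $d_2\le 1$ follow immediately from the point-to-point capacity of a single-antenna receiver. The two non-trivial bounds are the weighted sums, and by the symmetry of the model (the two users' roles are exchangeable, conditional on $\hat{\Sm}^n$, thanks to the isotropic assumption) it is enough to establish $2d_1 + d_2 \le 2+\alpha$.

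For the converse, I would follow a genie-aided, enhanced-channel strategy combined with the extremal inequality of Liu--Viswanath/Weingarten. Concretely, I first enhance the channel by providing user~2 with the side information $y^n$ (equivalently, with $W_1$ after decoding); user~2 is then at least as strong as user~1, yielding by Fano the bounds $nR_1 \le I(W_1; y^n \mid \text{CSI}) + n\epsilon_n$ and $nR_2 \le I(W_2; y^n, z^n \mid W_1, \text{CSI}) + n\epsilon_n$. Forming $2nR_1+nR_2$, invoking the Markov structure \eqref{eq:Markov}, and using the isotropic assumption to equate the conditional differential entropies $h(y^n \mid W_1,\hat{\Sm}^n) = h(z^n \mid W_1,\hat{\Sm}^n)$, one reduces the weighted sum to a single difference of differential entropies under a trace power constraint. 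The extremal inequality then selects the worst-case Gaussian input aligned with the direction $\hat{\gv}_t^\perp$; the residual leakage toward user~2 has power of order $P\sigma^2 = P^{1-\alpha}$, producing at high SNR the clean bound $2d_1 + d_2 \le 2+\alpha$.

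For achievability, I would exhibit schemes at three corner points and then time-share. At the symmetric corner $d_1 = d_2 = \frac{2+\alpha}{3}$, use a three-slot block scheme: in slots~1 and~2, transmit ZF-precoded private streams for each user based on the imperfect current CSIT, which confines each overheard interference to power $O(P^{1-\alpha})$; in slot~3, quantize these overheard interferences at rate $\approx (1-\alpha)\log P$, combine with an additional $\alpha\log P$ of ZF-based private symbols, and multicast the quantization indices as a common message decodable by both receivers, who then subtract the portion intended for their peer and solve for their own symbols. Counting the bits delivered over the three slots yields $(2+\alpha)/3$ per user. At the asymmetric corners $(1,\alpha)$ and $(\alpha,1)$, delayed CSIT turns out to be superfluous: split one user's message into a common and a private part, superimpose the other user's ZF-precoded private stream, and tune the powers so that both receivers decode the common part and each decodes its own private part.

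The main obstacle will be the converse, specifically the handling of the current CSIT $\hat{\Sm}_t$, which breaks the naive symmetry between the users because the transmit codeword depends on $\hat{\hv}_t$ and $\hat{\gv}_t$ in an arbitrary, data-dependent way. Isolating the correct conditioning (on all past states \emph{and} on the current estimate $\hat{\Sm}_t$) so that the isotropic assumption delivers the crucial entropy equality, and then verifying that the extremal inequality applies with the right covariance constraint to bring out a $P^{1-\alpha}$ residual rather than a full $P$, is where the analysis is most delicate and where the parameter $\alpha$ enters the outer bound in a tight way.
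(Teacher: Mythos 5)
Your high-level template for the converse --- genie-aided degradedness, extremal inequality, isotropy of the estimation error --- is indeed the paper's template, and your achievability sketch captures the essential ideas (ZF to shrink the overheard interference to power $P^{1-\alpha}$, quantize, multicast while superposing fresh $\alpha\log P$ private symbols). But the specific step you rely on in the converse is false, and the paper does something different there.

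You reduce the weighted sum by ``using the isotropic assumption to equate the conditional differential entropies $h(y^n\mid W_1,\hat{\Sm}^n)=h(z^n\mid W_1,\hat{\Sm}^n)$.'' This is the MAT argument, and it works there because the transmitter has \emph{no} current CSIT: given the past states, the two receivers' channels are statistically indistinguishable to the encoder, so swapping $\hv\leftrightarrow\gv$ is a symmetry of the conditional law. Here that symmetry is broken. The codeword $X_i$ is a function of $\hat{\Sm}^i$, so the encoder can deterministically align $X_i$ with $\hat{\gv}_i^\perp$ (or any other estimate-dependent direction) at every instant; conditional on $\hat{\Sm}^i$ the laws of $\hv_i$ and $\gv_i$ are centered at the \emph{different} points $\hat{\hv}_i$ and $\hat{\gv}_i$, and isotropy of $\tilde{\hv}_i,\tilde{\gv}_i$ does not restore exchangeability. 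You flag this at the end of your proposal as ``where the analysis is most delicate,'' but you do not resolve it, and it cannot be patched by choosing the right conditioning --- the equality simply does not hold.

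The paper avoids the entropy-equality route entirely. It forms $n(R_1+2R_2)\le\sum_i\bigl(2h(Z_i\mid S_i)+h(Y_i,Z_i\mid T_i,S_i)-2h(Z_i\mid T_i,S_i)\bigr)+3n\epsilon_n$, bounds $2h(Z_i\mid S_i)\le 2\log P+O(1)$ directly by a Gaussian maximum-entropy argument (no user symmetry invoked), and then applies the extremal inequality to the remaining difference. This produces, after dropping the covariance side constraint, the per-letter Gaussian quantity
\begin{align}
\E_{\hat S_i}\max_{\Km\succeq 0,\;\trace[\Km]\le P}\E_{S_i\mid\hat S_i}\bigl(\log(1+\hv_i^\H\Km\hv_i)-\log(1+\gv_i^\H\Km\gv_i)\bigr).
\end{align}
Isotropy enters only now, through Lemma~\ref{lemma:logdet}: for \emph{every} admissible $\Km$ with top eigenvalue $\lambda_1$, $\E_{S_i\mid\hat S_i}\log(1+\gv_i^\H\Km\gv_i)\ge\log(1+2^\gamma\sigma^2\lambda_1)+O(1)$. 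This is a pointwise lower bound on the unintended user's received log-power --- the isotropic error guarantees that no matter how $\Km$ is aligned, a fraction $\sim\sigma^2$ of the dominant power still leaks --- and it is this, together with the matching upper bound on the intended user's term, that yields the $-\log\sigma^2=\alpha\log P$ contribution. You state the correct intuition (``residual leakage of order $P^{1-\alpha}$'') but attach it to the wrong mechanism; Lemma~\ref{lemma:logdet} is the missing ingredient.

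On achievability, the bit accounting as you describe it does not close. The paper's symmetric-corner scheme uses one slot of superposed ZF-precoded private vectors (producing exactly two overheard interferences $\eta_1,\eta_2$, together requiring $\approx 2(1-\alpha)\log P$ quantization bits) followed by \emph{two} multicast slots, each carrying $(1-\alpha)\log P$ common bits plus $\alpha\log P$ fresh private bits per user via Lemma~\ref{lemma:BC-CM}; the per-user total $(2-\alpha)+2\alpha$ over three slots gives $\frac{2+\alpha}{3}$. Your ``two private slots then one multicast slot'' split either generates more interferences than a single slot at DoF $1-\alpha$ can carry, or does not allot the private streams the dimensions needed to reach $\frac{2+\alpha}{3}$.
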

\begin{figure}
 \centering
\includegraphics[width=0.48\textwidth]{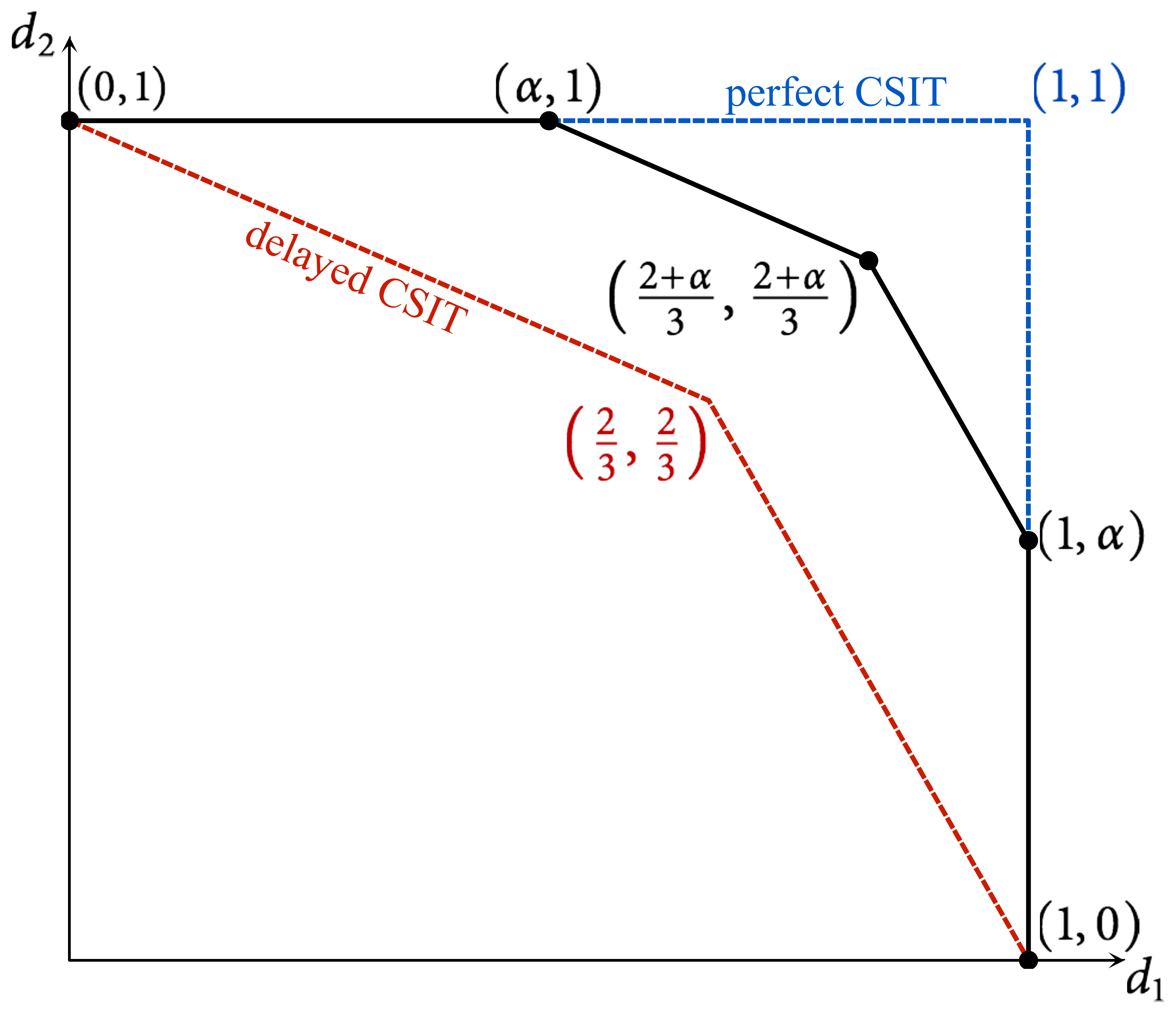}
\caption{DoF region of a two-user MISO channel with perfect delayed and
imperfect current CSI at the transmitter. The estimation error of the
current state scales as $P^{-\alpha}$.}
\label{fig:DoF-Region}
\end{figure}
\vspace{-\baselineskip}
As shown in Fig.~\ref{fig:DoF-Region}, the DoF region is a polygon characterized
by the vertices: $(0,1)$, $(\alpha,1)$, $(\frac{2+\alpha}{3},
\frac{2+\alpha}{3})$, $(1,\alpha)$, $(1,0)$.  
Note that the region collapses to the MAT
region~\cite{maddah2010degrees} when the quality of current CSIT is
poor~($\alpha\to0$), whereas it grows smoothly towards the DoF region
with perfect CSIT when $\alpha$ increases. In the following sections, we start 
with the converse proof by establishing outer bounds. Then, we propose
schemes that achieve the corner points of the region.

\section{Converse}
\label{sec:converse}

In this section, we establish the converse proof of the main result.
Before going into the details, we would like to point out the essential
elements of the upcoming proof: 
\begin{itemize}
  \item Genie-aided model: construct a degraded broadcast
    channel, as in \cite{maddah2010degrees}.
  \item Extremal inequality: bound the weighted difference
    of differential entropies~\cite{LiuViswanath}.
  \item Isotropic property of the channel uncertainty: tight upper bound on
    the pre-log factor. 
\end{itemize}

First, let us consider the 
genie-aided model where the genie provides the received signal $\{z_t\}$
of user~2 to user 1. 
This is a degraded broadcast channel $X \leftrightarrow (Y,Z)
\leftrightarrow Z$. Therefore, we
have the following upper bounds on the rates~$(R_1, R_2)$:
\begin{align}
  n R_1  &\le H(W_1) \\
  &= H(W_1 \cond S^n, \hat{S}^n) \\
  &= I(W_1; Y^n, Z^n \cond S^n, \hat{S}^n) + n \epsilon_n
  \label{eq:Fano1} \\
  &\le I(W_1; Y^n, Z^n, W_2 \cond S^n, \hat{S}^n) + n \epsilon_n \\
  &=  I(W_1; Y^n, Z^n \cond S^n, \hat{S}^n, W_2) + n \epsilon_n \\
  &= \sum_{i=1}^n I(W_1; Y_i, Z_i \cond Y^{i-1}, Z^{i-1}, S^n, \hat{S}^n, W_2) + n \epsilon_n \\
  &\le \sum_{i=1}^n I(X_i; Y_i, Z_i \cond Y^{i-1}, Z^{i-1}, S^n,
  \hat{S}^n, W_2) + n \epsilon_n \ \  \label{eq:dataproc} \\
  &= \sum_{i=1}^n I(X_i; Y_i, Z_i \cond Y^{i-1}, Z^{i-1}, S^i,
  \hat{S}^i, W_2) + n \epsilon_n \label{eq:future1}\\
  &= \sum_{i=1}^n \bigl( h(Y_i, Z_i \cond Y^{i-1}, Z^{i-1}, S^i, \hat{S}^i,
  W_2)  \\ 
  &\qquad  - h(Y_i, Z_i \cond X_i, Y^{i-1}, Z^{i-1}, S^i, \hat{S}^i, W_2)
  \bigr) + n \epsilon_n \\
  &= \sum_{i=1}^n \bigl( h(Y_i, Z_i \cond T_i, S_i) - h(\Noisey_i,
  \Noisez_i) \bigr) + n \epsilon_n \\
  &\le \sum_{i=1}^n h(Y_i, Z_i \cond T_i, S_i) + n \epsilon_n
  \label{eq:conv1}
\end{align}%
\begin{align} 
  n R_2 & \le H(W_2) \\
  &\le I(W_2; Z^n \cond S^n, \hat{S}^n) + n \epsilon_n   \label{eq:Fano2}\\
  &= \sum_{i=1}^n I(W_2; Z_i \cond Z^{i-1}, S^i, \hat{S}^i) + n
  \epsilon_n  \label{eq:future2}\\
  &= \sum_{i=1}^n \bigl( h(Z_i \cond Z^{i-1}, S^i, \hat{S}^i)\\
  &\qquad - h(Z_i \cond Z^{i-1}, S^i, \hat{S}^i, W_2) \bigr) + n
  \epsilon_n \\
  &\le \sum_{i=1}^n \bigl( h(Z_i \cond S_i) \\ 
  &\qquad - h(Z_i \cond Y^{i-1},
  Z^{i-1}, S^i, \hat{S}^i, W_2) \bigr) + n \epsilon_n \label{eq:entropy} \\
  &= \sum_{i=1}^n \bigl( h(Z_i \cond S_i) - h(Z_i \cond T_i, S_i)
  \bigr) + n \epsilon_n
  \label{eq:R2}
\end{align}%
where we define $T_i \defeq (Y^{i-1}, Z^{i-1}, S^{i-1}, \hat{S}^{i},
W_2)$. Note that the above chains of
inequalities follow closely Gallager's proof for the degraded broadcast
channel~\cite{Gallager_BC}~(also see \cite{Cover_Thomas}), with the
integration of the channel states. In particular, \eqref{eq:Fano1} and
\eqref{eq:Fano2} are from Fano's inequality; \eqref{eq:dataproc} is from
the data processing inequality; \eqref{eq:future1} holds because the
input $X_i$ and the outputs $(Y_i, Z_i)$ of the channel at
instant $i$ do not depend on the future states given the past and
current states; \eqref{eq:future2} results from the same reasoning and the
chain rule of mutual information; \eqref{eq:conv1} is from the
non-negativity of the differential entropy of unit-variance AWGN, i.e.,
$h(\Noisey_i, \Noisez_i)\ge0$; \eqref{eq:entropy} holds since
removing~(resp.~adding) conditions does not decrease~(resp.~increase) differential entropy.   
In the following, we would like to obtain an upper bound on $R_1+2R_2$. From
\eqref{eq:conv1} and \eqref{eq:R2}, we have
\begin{align}
  n(R_1 + 2R_2) &\le \sum_{i=1}^n \bigl(  2 h(Z_i \cond S_i) + h(Y_i,
  Z_i \cond T_i, S_i) \\
  & \qquad - 2 h(Z_i \cond T_i, S_i) \bigr) + 3 n \epsilon_n. 
  \label{eq:R1+2R2}
\end{align}%
Now, we can upper-bound each term in the above summation: 
\begin{align}
\MoveEqLeft[1] 2 h(Z_i \cond S_i) + h(Y_i, Z_i \cond T_i, S_i) - 2 h(Z_i \cond T_i, S_i)  
\nonumber \\
&\le \max_{P_{T_i} P_{X_i\vert T_i} } \!\!\bigl( 2 h(Z_i \cond S_i) \\
&\qquad + h(Y_i, Z_i \cond T_i, S_i) - 2
h(Z_i \cond T_i, S_i) \bigr) \\
&\le \max_{P_{T_i} P_{X_i\vert T_i} } \!\! 2 h(Z_i \cond
S_i) \\
&\qquad + \max_{P_{T_i} P_{X_i\vert T_i} } \!\! \bigl(h(Y_i, Z_i \cond T_i, S_i) -
2 h(Z_i \cond T_i, S_i)\bigr).\ \quad \label{eq:tmp776} 
\end{align}%
The first maximization can be upper-bounded as: 
\begin{align}
   \max_{P_{T_i} P_{X_i\vert T_i} } 2 h(Z_i \cond S_i)
   &\le 2\, \E_{G_i} \Bigl( \max_{P_{X_i\vert G_i= \gv_i}} h(\gv_i^\H
   X_i  + \Noisey_i )  \Bigr) \\ 
   &\le 2\, \E_{G_i} \bigl( \log(1+P \Norm{\gv_i})  \bigr) \\
   &\le 2\log P + O(1) \label{eq:hZub} 
\end{align}%
where, to get the first inequality, we put the maximization into the
expectation; the second inequality is from the fact that Gaussian distribution 
maximizes differential entropy under the covariance constraint, that the
logarithmic function is monotonically increasing, and that the
following partial ordering holds $\mathsf{Cov}(X_i \cond \gv_i)
\preceq \mathsf{Cov}(X_i) \preceq P \Id$; the last one is from
Jensen's inequality. The second maximization in
\eqref{eq:tmp776} can also be bounded, but in a slightly more involved
way, as shown in \eqref{eq:tmp20}-\eqref{eq:tmp26} on the top of next page. 
\begin{figure*}[!t]
\normalsize
\begin{align}
  \MoveEqLeft
  \max_{P_{T_i} P_{X_i\vert T_i}} \bigl(h(Y_i, Z_i \cond T_i, S_i) - 2 h(Z_i \cond T_i, S_i)\bigr) \\
&\le \max_{P_{T_i}} \E_{T_i}\Bigl( \max_{P_{X_i\vert T_i} } \bigl(h(Y_i, Z_i \cond T_i=T, S_i) - 2 h(Z_i \cond T_i=T, S_i)\bigr) \Bigr)
\label{eq:tmp20}
 \\
 &= \max_{P_{T_i}} \E_{T_i}\Bigl( \max_{P_{X_i\vert T_i}}
 \E_{S_i \vert T_i} \bigl(  h(Y_i, Z_i \cond
 T_i=T, S_i = \Sm_i) - 2 h(Z_i \cond T_i=T, S_i=\Sm_i) \bigr)\Bigr) 
 \label{eq:tmp21}
 \\
 &= \max_{P_{T_i}} \E_{T_i}\Bigl( \max_{P_{X_i \vert T_i}} \E_{S_i \vert
 \hat{S}_i} \bigl(  h(\Sm_i X_i + N_i \cond T_i=T) - 2 h( \gv_i^\H X_i + E_i
 \cond T_i=T) \bigr)\Bigr) 
 \label{eq:tmp22}
 \\
 &= \max_{P_{T_i}} \E_{T_i}\biggl( \max_{\Cm:\Cm\succeq0,\trace[\Cm]\le P}
 \max_{P_{X_i \vert T_i}: \atop \mathsf{Cov}(X_i\vert T_i) \preceq \Cm}
 \E_{S_i \vert \hat{S}_i} \bigl(  h(\Sm_i X_i + N_i \cond T_i=T) - 2 h( \gv_i^\H X_i + E_i \cond T_i=T) \bigr) \biggr)
 \label{eq:tmp23}
 \\
 &= \max_{P_{T_i}} \E_{T_i}\biggl( \max_{\Cm:\Cm\succeq0,\trace[\Cm]\le P}
 \E_{S_i \vert \hat{S}_i} \bigl(  \log\det[\Id+\Sm_i \Km_* \Sm_i^\H]  -
 2\log(1+\gv_i^\H \Km_* \gv_i) \bigr) \biggr) 
 \label{eq:tmp24}
 \\
 &\le  \E_{\hat{S}_i}\biggl( \max_{\Km:\Km\succeq0,\trace[\Km]\le P}
 \E_{S_i \vert \hat{S}_i} \bigl(  \log\det[\Id+\Sm_i \Km \Sm_i^\H] 
  - 2\log(1+\gv_i^\H \Km \gv_i) \bigr)\biggr) 
 \label{eq:tmp25} \\
 &\le  \E_{\hat{S}_i}\biggl( \max_{\Km:\Km\succeq0,\trace[\Km]\le P}
 \E_{S_i \vert \hat{S}_i} \bigl(  \log(1+\hv_i^\H \Km \hv_i) 
  - \log(1+\gv_i^\H \Km \gv_i) \bigr)\biggr) 
 \label{eq:tmp26}
\end{align}
\hrulefill
\vspace*{4pt}
\end{figure*}
We get \eqref{eq:tmp20} by putting one of the maximizations into the
expectation, which does not decrease the value; in \eqref{eq:tmp22}, we
define $N_i \defeq [\Noisey_i\ \ \Noisez_i]^\T$; \eqref{eq:tmp23}
is obtained by splitting one maximization into two, one with
the trace constraint and the other with the covariance constraint;
\eqref{eq:tmp24} is from the fact that with covariance constraint,
Gaussian distribution maximizes the weighted difference of two
differential entropies, given that~i)~$S_i$ is independent of $X_i$
conditional on $T_i = (Y^{i-1}, Z^{i-1}, S^{i-1}, \hat{S}^{i},
W_2)$ due to the Markovian~\eqref{eq:Markov} and the fact that
$X_i$ is a function of the messages $(W_1, W_2)$, the past states
$S^{i-1}$, and the estimates up to the current state $\hat{S}^i$, and
that ii)~$Y_i$ is a degraded version of
$(Y_i, Z_i)$; this is an application of the extremal inequality
\cite{LiuViswanath,Weingarten}; note that
$\Km_*\preceq\Cm$ is defined as the optimal covariance for the inner
maximization; \eqref{eq:tmp25} holds because any
$\Km$ such that $0 \preceq \Km \preceq \Cm$ with $\trace[\Cm]\le P$
belongs to the set $\left\{ \Km:\ \Km\succeq0, \trace[\Km]\le P
\right\}$, and that the whole term only depends on $\hat{S}_i$; the last
inequality is from the fact that $\det[\Id+\Am] \le
(1+a_{11})(1+a_{22})$ for any $\Am\defeq[a_{ij}]_{i,j=1,2} \succeq
\mbs{0}$. 

\begin{lemma}\label{lemma:logdet}
  For any given $\Km\succeq0$ with eigenvalues $\lambda_1\ge\cdots\ge\lambda_m\ge0$,
  we have 
  \begin{align}
\E_{S_i \vert \hat{S}_i}\! \bigl( \log(1+\hv_i^\H \Km \hv_i) \bigr)
&\le \log(1+\Norm{\hat{\hv}_i} \lambda_1) + O(1), \ \quad \label{eq:tmp89} \\
\E_{S_i \vert \hat{S}_i}\! \bigl(  \log(1+\gv_i^\H \Km \gv_i) \bigr)
&\ge  \log(1+ 2^{\gamma} \sigma^2 \lambda_1 ) 
+ O(1), \label{eq:tmp99} 
  \end{align}%
  with
  \begin{equation}
    \gamma \defeq \E_{\tilde{S}_{i} \vert
  \hat{S}_i} \Bigl( \log \frac{\Abs{\tilde{g}_{i,1}}}{\sigma^2}
  \Bigr).
  \end{equation}
\end{lemma}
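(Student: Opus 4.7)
The plan is to handle the two bounds separately by reducing each to a scalar quadratic form via the spectral decomposition $\Km = \sum_{j=1}^m \lambda_j \uv_j \uv_j^\H$, and then applying Jensen's inequality in opposite directions together with the isotropy of the estimation errors from Assumption~\ref{assumption:fading}.

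First I would prove the upper bound \eqref{eq:tmp89}. The operator bound $\hv_i^\H \Km \hv_i \le \lambda_1 \Norm{\hv_i}$ and the decomposition $\hv_i = \hat{\hv}_i + \tilde{\hv}_i$ give $\E_{\tilde{S}_i\vert\hat{S}_i}(\Norm{\hv_i}) = \Norm{\hat{\hv}_i} + m\sigma^2$, since the cross term $2\Re(\hat{\hv}_i^\H\tilde{\hv}_i)$ has zero conditional mean and $\tilde{\hv}_i$ has covariance $\sigma^2\Id_m$. Applying Jensen's inequality in the concave direction to $\log(1+\cdot)$ then yields
\begin{align}
  \E_{S_i\vert\hat{S}_i}\bigl( \log(1+\hv_i^\H\Km\hv_i) \bigr) \le \log\bigl(1+\lambda_1\Norm{\hat{\hv}_i}+m\sigma^2\lambda_1\bigr).
\end{align}
The extra $m\sigma^2\lambda_1$ splits off via $\log(1+A+B)\le\log(1+A)+\log(1+B)$; since the lemma is used in \eqref{eq:tmp26} only jointly with the matching lower bound \eqref{eq:tmp99}, the residual $\log(1+m\sigma^2\lambda_1)-\log(1+2^\gamma\sigma^2\lambda_1)$ is uniformly bounded by $\log(m/2^\gamma)$ in $\lambda_1$, which is absorbed into the $O(1)$ term.

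For the lower bound \eqref{eq:tmp99}, I would keep only the dominant rank-one contribution, $\gv_i^\H\Km\gv_i \ge \lambda_1 \Abs{\gv_i^\H\uv_1}$, and write $\gv_i^\H\uv_1 = c+Z$ with $c\defeq \hat{\gv}_i^\H\uv_1$ (deterministic given $\hat{S}_i$) and $Z\defeq \tilde{\gv}_i^\H\uv_1$. By the isotropy of $\tilde{\gv}_i$ conditional on $\hat{S}_i$, the scalar $Z$ is rotationally symmetric, and a unitary rotation taking $\uv_1$ to $\ev_1$ gives $Z\stackrel{d}{=}\tilde{g}_{i,1}$; hence $\E_{\tilde{S}_i\vert\hat{S}_i}(\log\Abs{Z}) = \log\sigma^2+\gamma$ by the definition of $\gamma$. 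The classical mean-value identity
\begin{align}
  \frac{1}{2\pi}\int_0^{2\pi}\log\abs{c+re^{\imath\theta}}\,d\theta = \max(\log\abs{c},\log r)
\end{align}
then gives, after conditioning on $\abs{Z}=r$ and averaging over $\arg Z$,
\begin{align}
  \E_{\tilde{S}_i\vert\hat{S}_i}(\log\Abs{c+Z}) = 2\,\E_{\tilde{S}_i\vert\hat{S}_i}\bigl(\max(\log\abs{c},\log\abs{Z})\bigr) \ge \log\sigma^2+\gamma.
\end{align}
Finally, convexity of $y\mapsto \log(1+2^y)$ and Jensen's inequality in the convex direction yield
\begin{align}
  \E_{S_i\vert\hat{S}_i}\bigl(\log(1+\lambda_1\Abs{\gv_i^\H\uv_1})\bigr) \ge \log\bigl(1+\lambda_1\cdot 2^{\E_{\tilde{S}_i\vert\hat{S}_i}(\log\Abs{\gv_i^\H\uv_1})}\bigr) \ge \log(1+2^\gamma\sigma^2\lambda_1),
\end{align}
matching \eqref{eq:tmp99}.

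The step I expect to be the most delicate is the mean-value inequality $\E(\log\Abs{c+Z})\ge \E(\log\Abs{Z})$ for rotationally symmetric $Z$, which ultimately relies on Jensen's formula (equivalently, the subharmonicity of $w\mapsto\log\abs{w}$); the rest of the argument is spectral domination plus Jensen in the appropriate direction. Controlling the $O(1)$ slack in the upper bound is the other technical point, but is neutralised by the matching lower bound when the two are combined in \eqref{eq:tmp26}.
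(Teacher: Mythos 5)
Your argument follows essentially the same route as the paper's: spectral domination by $\lambda_1$, Jensen in the appropriate direction, and the rotational-invariance/Jensen-formula identity for $\E_\theta\log\abs{c+re^{j\theta}}$ (the paper's Lemma~\ref{lemma:BA}). Two points are worth flagging.

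\textbf{Upper bound \eqref{eq:tmp89}: the split is not tight enough.} After Jensen you arrive, as the paper does, at $\log\bigl(1+\lambda_1\Norm{\hat{\hv}_i}+m\sigma^2\lambda_1\bigr)$. Your split $\log(1+A+B)\le\log(1+A)+\log(1+B)$ leaves a residual $\log(1+m\sigma^2\lambda_1)$, which is not $O(1)$: with $\lambda_1$ of order $P$ and $\sigma^2\sim P^{-\alpha}$ it grows like $(1-\alpha)\log P$. So \eqref{eq:tmp89} as stated is not established by your argument; you patch it by noting the residual cancels against the matching lower-bound term in \eqref{eq:tmp26}, which salvages the downstream application but not the lemma itself. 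The paper avoids this entirely with the split
$\log(1+A+B)=\log(1+A)+\log\bigl(1+\tfrac{B}{1+A}\bigr)\le\log(1+A)+\log\bigl(1+\tfrac{B}{A}\bigr)$
(taking $A=\lambda_1\Norm{\hat{\hv}_i}$, $B=m\sigma^2\lambda_1$), which yields a residual $\log\bigl(1+m\sigma^2/\Norm{\hat{\hv}_i}\bigr)$ that is genuinely independent of $\lambda_1$. That one-line change makes your upper bound correct as a standalone inequality.

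\textbf{Lower bound \eqref{eq:tmp99}: a cleaner route than the paper's.} Here you depart from the paper in a mild but pleasant way. The paper first peels off $\log(1+\sum_j\lambda_j\Abs{\hat\psi_j+\tilde\psi_j})\ge\bigl(\log(\lambda_1\Abs{\hat\psi_1+\tilde\psi_1})\bigr)^+$, pulls the $(\cdot)^+$ outside the expectation, applies the mean-value formula, and finishes with the loose inequality $(\log x)^+\ge\log(1+x)-1$, paying a $-1$ slack. You instead retain the full $\log(1+\lambda_1 2^Y)$ with $Y=\log\Abs{\gv_i^\H\uv_1}$, observe that $y\mapsto\log(1+\lambda_1 2^y)$ is convex (softplus), and apply Jensen directly to get $\E\log(1+\lambda_1 2^Y)\ge\log(1+\lambda_1 2^{\E Y})$, then bound $\E Y\ge\log\sigma^2+\gamma$ by the same mean-value identity and isotropy. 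This eliminates the $(\cdot)^+$ gymnastics and the $-1$ term, giving the bound without any additive slack. Both approaches rely on the same two ingredients (isotropy to reduce to the first coordinate, and the subharmonicity of $\log\abs{\cdot}$); yours is a shade tighter and arguably easier to read.

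The reduction $Z=\tilde{\gv}_i^\H\uv_1\stackrel{d}{=}\tilde g_{i,1}$ is sound since $\Km$, hence $\uv_1$, is fixed given $\hat{S}_i$ and the conditional law of $\tilde{\gv}_i$ is unitarily invariant; that matches the paper's use of $\Vm^\H\tilde{\gv}_i$.
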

\vspace{\baselineskip}
\begin{proof}
  See Appendix~\ref{app:proof-lemma-logdet}.
\end{proof}
It is worth noting that $\gamma$ is finite according to
Assumption~\ref{assumption:fading}. Therefore, $2^\gamma$ is a
strictly positive and bounded value that can be regarded as constant 
as far as the DoF is concerned. From Lemma~\ref{lemma:logdet}, we have
\begin{align}
\MoveEqLeft 
\E_{S_i \vert \hat{S}_i} \bigl( \log(1+\hv_i^\H \Km \hv_i)
- \log(1+\gv_i^\H \Km \gv_i) \bigr) \nonumber \\ &\le
\log\frac{1+\Norm{\hat{\hv}_i} \lambda_1}{1+ 2^{\gamma} \sigma^2
\lambda_1} + O(1) \label{eq:tmp667}
 \\
&\le \log\biggl( 1 + \frac{\Norm{\hat{\hv}_i}}{2^{\gamma} \sigma^2}
\biggr) + O(1)\label{eq:tmp666}
\\
&\le - \log(\sigma^2) + \log\bigl( 2^{\gamma}\sigma^2 +
{\Norm{\hat{\hv}_i}} \bigr)  + O(1) \label{eq:tmpub}
\end{align}%
where \eqref{eq:tmp666} is from the fact that $\log\frac{1+ax}{1+bx} \le
\log(1+\frac{a}{b})$, $\forall\, a,x\ge0,\,b>0$.  
Note that the above upper bound does not depend on $\Km$. From
\eqref{eq:tmp26} and \eqref{eq:tmpub} and by noticing
that $\sigma^2\le1$, we have 
\begin{align}
    \MoveEqLeft \max_{P_{T_i} P_{X_i\vert T_i}} \bigl(h(Y_i, Z_i \cond T_i, S_i) -
2 h(Z_i \cond T_i, S_i)\bigr)  \nonumber \\
&\le \alpha \log P + \E_{\hat{S}_i}\bigl( \log\bigl( 2^{\gamma} +
{\Norm{\hat{\hv}_i}} \bigr) \bigr) + O(1) \\
&= \alpha \log P + O(1). \label{eq:hYZ-2hZ}
\end{align}%
From \eqref{eq:R1+2R2}, \eqref{eq:hZub}, \eqref{eq:hYZ-2hZ}, and by letting
$n \to \infty$, we have
\begin{align}
  R_1 + 2R_2 &\le (2+\alpha) \log P + O(1), 
\end{align}%
from which we obtain \eqref{eq:d1+2d2} by dividing both sides of the
above inequality by $\log P$ and tending $P\to\infty$. Similarly, from
\eqref{eq:R2} and \eqref{eq:hZub}, and by letting $n\to\infty$, we have
\begin{align}
  R_2 &\le \log P + O(1),
\end{align}%
from which the single user bound \eqref{eq:d2} follows immediately. To obtain
\eqref{eq:d1} and \eqref{eq:2d1+d2}, we can use the genie-aided model in which receiver~2
is helped by the genie and has perfect knowledge of $y_t$. Due to the
symmetry, the same reasoning as above can be applied by swapping the
roles of receiver~1 and receiver~2. The converse part is thus completed. 

\begin{remark}
In a nutshell, the converse proof can be summarized as follows, in terms
of the essential elements mentioned at the beginning of this section.
First, the ``degraded'' property enables the use of the extremal
inequality~(cf.~\eqref{eq:tmp23} and \eqref{eq:tmp24}).  Then, the
latter provides a closed-form upper bound given by the Gaussian
distribution~(cf.~\eqref{eq:tmp26}). Finally, the isotropic property of
the channel uncertainty is exploited only at the end of the proof, to
bound the expectation of the logarithmic function~(cf.~\eqref{eq:tmp99}). 
\end{remark}

\section{Achievability}

To show the achievability of the whole region, it is enough to show that
all corner points in Fig.~\ref{fig:DoF-Region} are achievable. Note
that the extreme points $(1,0)$ and $(0,1)$ can be trivially achieved by serving
only one of the users.  The rest of the section is devoted to proving the
achievability of $(1,\alpha)$, $(\alpha,1)$, and $\left(
\frac{2+\alpha}{3}, \frac{2+\alpha}{3} \right)$. 
Since the DoF region does not depend on the number of transmit
antennas~$m$, $\forall\,m\ge2$, it is enough to prove the achievability for the case
$m=2$ which is assumed implicitly in this section. 
The exact achievable rate region from which the
DoF can be derived in a more rigorous way is provided in the 
appendix.

\subsection{Achieving $(1,\alpha)$ and $(\alpha,1)$}

One of the key elements to achieve the three corner points is broadcasting
with common message in the presence of imperfect current CSIT. The
following result is crucial and will be repeatedly used in the proofs. 

\begin{lemma}[broadcast channel with common message]
  \label{lemma:BC-CM}
  Let $(R_\text{c}, R_{\text{p}1}, R_{\text{p}2})$ be the rate of common message, private message for
user~1, and private message for user~2, respectively. Furthermore, we
let $(d_\text{c}, d_{\text{p}1}, d_{\text{p}2})$ be the corresponding DoF. Then, there exists a
family of codes~$\left\{ \Xc_\text{c}(P), \Xc_{\text{p}1}(P), \Xc_{\text{p}2}(P) \right\}$, such that 
\begin{align}
  d_\text{c} &= 1-\alpha, \quad \text{and} \quad
  d_{\text{p}1} = d_{\text{p}2} = \alpha
\end{align}%
are achievable simultaneously. 
\end{lemma}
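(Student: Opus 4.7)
The plan is to use superposition coding on top of a zero-forcing~(ZF) layer that exploits the current channel estimates. Concretely, I would construct
\begin{align}
\xv_t = \uv_t + \hat{\gv}_t^{\perp} v_{1,t} + \hat{\hv}_t^{\perp} v_{2,t},
\end{align}
where $\uv_t$ carries the common message from $\Xc_{\text{c}}(P)$ and is allocated power of order $P$, while the scalar codewords $v_{1,t}$ and $v_{2,t}$, drawn from $\Xc_{\text{p}1}(P)$ and $\Xc_{\text{p}2}(P)$, carry the two private messages and are each allocated power of order $P^{\alpha}$. The precoders $\hat{\gv}_t^{\perp}$ and $\hat{\hv}_t^{\perp}$ are unit-norm vectors orthogonal to the current estimates $\hat{\gv}_t$ and $\hat{\hv}_t$, respectively, and all three codebooks can be taken to be i.i.d.\ Gaussian.

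The decoding proceeds in two stages at each receiver: first decode the common message while treating the two private streams as noise, then strip it and decode the user's own private message. At user~1, since $\hat{\hv}_t^\H \hat{\hv}_t^{\perp}=0$, the received signal reads
\begin{align}
y_t = \hv_t^\H \uv_t + \hv_t^\H \hat{\gv}_t^{\perp} v_{1,t} + \tilde{\hv}_t^\H \hat{\hv}_t^{\perp} v_{2,t} + \noisey_t,
\end{align}
so the interference from $v_{2,t}$ enters only through the estimation error $\tilde{\hv}_t$, whose variance is $\sigma^2 = P^{-\alpha}$. The resulting leakage power is $\sigma^2\cdot P^{\alpha} = O(1)$. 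The common-message SINR therefore scales as $P/(P^{\alpha}+O(1))=\Theta(P^{1-\alpha})$, yielding $d_{\text{c}}=1-\alpha$. After successfully decoding and cancelling $\uv_t$, the SINR for $v_{1,t}$ scales as $|\hv_t^\H \hat{\gv}_t^{\perp}|^2 P^{\alpha}/O(1)=\Theta(P^{\alpha})$, so $d_{\text{p}1}=\alpha$. The symmetric construction, swapping the roles of $\hv_t$ and $\gv_t$, yields $d_{\text{p}2}=\alpha$ at user~2 simultaneously.

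The main obstacle is converting this high-SNR SINR sketch into rigorous bounds on the achievable ergodic rates. Concretely, one needs to establish $\E\log(1+|\hv_t^\H \hat{\gv}_t^{\perp}|^2 P^{\alpha})=\alpha\log P + O(1)$ for the private DoF, $\E\log(1+|\tilde{\hv}_t^\H \hat{\hv}_t^{\perp}|^2 P^{\alpha})=O(1)$ for the leakage, and a matching rate for $\uv_t$ averaged jointly over $(\hat{\Sm}_t,\Sm_t)$. The isotropy of the estimation error together with the log-integrability conditions in Assumption~\ref{assumption:fading} and the rank/determinant condition $\E\log\det[\Sm_t\Sm_t^\H]>-\infty$ are the appropriate tools: they prevent both the leakage from being larger than $O(1)$ and any alignment accident between $\hat{\gv}_t^\perp$ and $\hv_t$ (or its analogue at user~2) from degrading the effective channel by more than a constant. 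These estimates, combined with a standard joint-typicality decoding argument at each receiver, are the detailed calculation I would carry out in the appendix.
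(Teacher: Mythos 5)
Your proposal is correct and takes essentially the same route as the paper: the paper's proof (sketched after the lemma statement and detailed in Appendix~\ref{app:BC-CM}) uses precisely the superposition $\xv = \xv_\text{c} + \xv_{\text{p}1} + \xv_{\text{p}2}$ with $\E[\xv_{\text{p}k}\xv_{\text{p}k}^\H]$ proportional to $P^\alpha$ times the projector onto the orthogonal complement of the other user's estimated channel direction, common power $\sim P$, and the same two-stage decoding (common first treating private as noise, then private after cancellation). The paper packages the achievable region as Proposition~\ref{prop:BC-CM} with $\Qm_\text{c}\sim P\Id$, $\Qm_{\text{p}1}\sim P^\alpha\Psimgp$, $\Qm_{\text{p}2}\sim P^\alpha\Psimhp$; your scalar-plus-unit-precoder construction is just that choice of covariances written out explicitly, and your identification of the leakage power $\sigma^2 P^\alpha = O(1)$ and the resulting SINR scalings $P^{1-\alpha}$ and $P^\alpha$ matches the paper's argument.
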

  A sketch of proof is as follows, with more details given in
  Appendix~\ref{app:BC-CM}. Let us consider a single
  channel use with a superposition scheme: $\xv = \xv_c + \xv_{\text{p}1} +
  \xv_{\text{p}2}$ with precoding such that $\E[\xv_{\text{p}1}\xv_{\text{p}1}^\H] =
  \frac{P_\text{p}}{2} \Psimgp$ and
  $\E[\xv_{\text{p}2}\xv_{\text{p}2}^\H] = \frac{P_\text{p}}{2} \Psimhp$.
  We set the power $P_\text{p} \sim P^{\alpha}$ such
  that the private signals are drowned by the AWGN at the unintended
  receivers while remaining the level $P^{\alpha}$ at the intended
  receivers. The power of the common signal is $P_\text{c} = \E[\Norm{\xv_c}] \sim P$.
  The decoding is performed as follows. At each receiver, the common
  message is decoded first by treating the private signals as noise.
  The signal-to-interference-and-noise ratio~(SINR) is approximately
  $P_\text{c} / P_\text{p} \sim P^{1-\alpha}$, from which the
  achievability of $d_\text{c} = 1-\alpha$ is
  shown. Then, each receiver decodes their own private
  messages, after removing the decoded common message. The SINR for the
  private message being approximately $P^{\alpha}$, $d_{\text{p}k} = \alpha$ is
  thus achievable for user~$k$, $k=1,2$.  

From the above lemma, the achievability of $(1,\alpha)$ is 
straightforward. Let $W_1$ and $W_2$ be the messages for user~1 and
user~2, respectively. Assuming that the DoF are respectively
$d_1$ and $d_2$, we can split user~1's message as $W_1 = (W_{10},
{W}_{11})$
with the corresponding rate-splitting $d_1 = d_{10} + {d}_{11}$. 
Then, $(W_{10}, {W}_{11}, W_2)$ are broadcast to both users with
$W_{10}$ as common message. According to Lemma~\ref{lemma:BC-CM},
$(W_{10},{W}_{11})$ and
$(W_{10}, W_2)$ can be recovered by user~1 and user~2, respectively, as
long as
\begin{align}
  d_{10} &\le 1-\alpha, \quad {d}_{11} \le \alpha, \quad \text{and}\quad d_2 \le \alpha
\end{align}%
which implies $d_1 = d_{10} + {d}_{11} \le 1$ and $d_2 \le \alpha$ are achievable
simultaneously. Similarly, $(\alpha, 1)$ can also be achieved by
the same scheme with rate-splitting over user~2's message. 

The proposed scheme, hereafter referred to as rate-splitting~(RS),
achieves both corner points $(1,\alpha)$ and $(\alpha,1)$ with only
current CSIT 
and without delayed CSIT at all. A sum DoF of $1+\alpha$ is thus
attained. The idea is
closely related to the Han-Kobayashi scheme~\cite{Han-Kobayashi} for
the two-user interference channel where each receiver can decode and
then eliminate the common part of the interfering signal to achieve a
higher rate. Therefore, the common message in our RS scheme is desirable
for only one of the users but is decodable by both users.


\subsection{Achieving the symmetric corner point $\left( \frac{2+\alpha}{3}, \frac{2+\alpha}{3} \right)$}

In the following, we show that exploiting both current and delayed CSIT,
the symmetric corner point $\left( \frac{2+\alpha}{3}, \frac{2+\alpha}{3} \right)$
can be achieved. It provides a sum DoF of $\frac{2(2+\alpha)}{3}$
that is strictly larger than $1+\alpha$ for $\alpha<1$.   
Since this scheme builds on the MAT scheme, we briefly review it first. 

\subsubsection{MAT alignment revisited}

In the two-user MISO case, the original MAT is a three-slot
scheme, described by the equations
\begin{align}
  \xv_1 &= \uv & 
  \xv_2 &= \vv & 
  \xv_3 &= [\,\gv_1^\H \uv + \hv_2^\H \vv  \quad 0\,]^\T
  \label{eq:tmp201}\\
  y_1 &= \hv_1^\H \uv & 
  y_2 &= \hv_2^\H \vv & 
  y_3 &= h_{31}^* (\gv_1^\H \uv + \hv_2^\H \vv)  \\
  z_1 &= \gv_1^\H \uv &
  z_2 &= \gv_2^\H \vv &
  z_3 &= g_{31}^*(\gv_1^\H \uv + \hv_2^\H \vv) 
  \label{eq:tmp203} 
\end{align}
where $\xv_t\in\mathbb{C}^{m\times1}, y_t, z_t\in\mathbb{C}$ are the
transmitted signal, received signals at user~1 and user~2, respectively, at time slot $t$;
$\uv,\vv\in\mathbb{C}^{m\times 1}$ are useful signals to user~1 and
user~2, respectively; for simplicity, we omit the noise in the received
signals.  
The idea of the MAT scheme is to use delayed CSIT to align the mutual
interference into a one-dimensional subspace~($\hv_1^\H
\vv$ for user 1 and $\gv_1^\H \uv$ for user 2). And importantly, the
interference is reduced without sacrificing the dimension of
the useful signals. 
Specifically, a two-dimensional interference-free observation
of $\uv$~(resp. $\vv$) is obtained at receiver~1~(resp.~receiver~2).

Interestingly, the alignment can be done in a different manner.
\begin{flalign}
   \xv_1 &= \uv + \vv& 
   \xv_2 &= [\,\hv_1^\H \vv \ \ 0\, ]^\T & 
   \xv_3 &= [\,\gv_1^\H \uv \ \ 0\, ]^\T & 
   \\
   y_1 &= \hv_1^\H (\uv + \vv) & 
   y_2 &= h_{21}^*\hv_1^\H \vv & 
   y_3 &= h_{31}^*\gv_1^\H \uv \\
   z_1 &= \gv_1^\H (\uv + \vv) &
   z_2 &= g_{21}^*\hv_1^\H \vv & 
   z_3 &= g_{31}^*\gv_1^\H \uv 
\end{flalign}
In the first slot, the transmitter sends the private signals to both
users by simply superposing them. In the second slot, the transmitter
sends the interference overheard by receiver 1 in the first slot. The
role of this stage is two-fold: \emph{resolving interference for user~1
and reinforcing signal for user~2}. In the third slot, the transmitter
sends the interference overheard by user 2 to help both users the other
way around. In summary, this variant of the MAT scheme consists of two phases:
i)~broadcast of the private signals, and ii)~multicast of the overheard
interferences. 
At the end of three time slots, the observations at the receivers are given by
\begin{align}
  \begin{bmatrix} y_1\\y_2\\y_3 \end{bmatrix} &= \underbrace{\begin{bmatrix}
    \hv_1^\H \\ 0 \\ h_{31}^* \gv_1^\H 
  \end{bmatrix}}_{\text{rank}=2} \uv + \underbrace{\begin{bmatrix}
    \hv_1^\H \\ h_{21}^* \hv_1^\H \\ 0 
  \end{bmatrix}}_{\text{rank}=1} \vv, 
\shortintertext{and} 
 \begin{bmatrix} z_1\\z_2\\z_3 \end{bmatrix} &= \underbrace{\begin{bmatrix}
   \gv_1^\H \\ g_{21}^* \hv_1^\H \\ 0   
 \end{bmatrix}}_{\text{rank}=2} \vv + \underbrace{\begin{bmatrix}
   \gv_1^\H \\  0 \\ g_{31}^* \gv_1^\H  
 \end{bmatrix}}_{\text{rank}=1} \uv. 
\end{align}%
For each user, the useful signal lies in a two-dimensional subspace
while the interference is aligned in a one-dimensional subspace.  It
readily follows that this variant enables each user to achieve two
degrees of freedom in the three-dimensional time space as for the
original MAT scheme.  Although the original and variant schemes are equivalent 
from the point of the
space-time alignment, they differ conceptually in the way how the
``order-two'' symbols are delivered. More precisely, the variant spends
two slots to deliver two separate symbols: the interferences overheard
by user~1 and user~2, denoted by 
\begin{align}
  \eta_1 &\defeq \hv_1^\H\vv \quad \text{and} \quad \eta_2 \defeq \gv_1^\H \uv,
\end{align}%
while the original MAT spends a single slot to deliver one symbol
$\hv_2^\H\vv+\gv_1^\H \uv$.


\subsubsection{Proposed scheme}
\label{sec:proposed-scheme}

\begin{figure*}
 \centering
\includegraphics[width=0.75\textwidth]{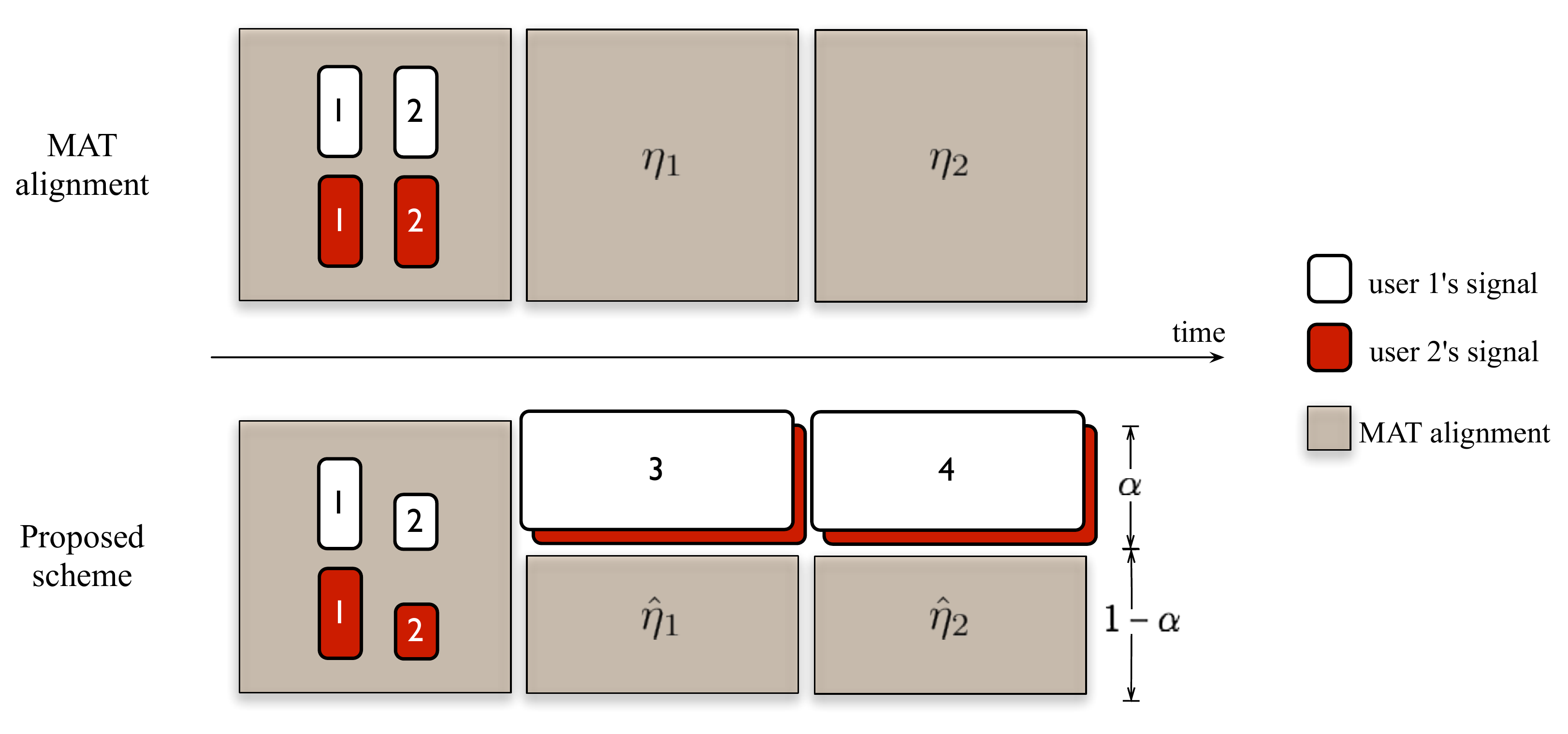} 
\caption{Overview of the main differences between the Maddah-Ali Tse
alignment and the proposed scheme.}
\label{fig:proposed-vs-MAT}
\end{figure*}

Based on the above variant of the MAT alignment, we propose a new scheme
that exploits optimally both the perfect delayed and imperfect current
CSIT. Before proceeding further, we would like to highlight
the main ideas as compared to the MAT alignment~(Fig.~\ref{fig:proposed-vs-MAT}):
\begin{itemize}
  \item Spatial precoding and power allocation in the first slot: $1+(1-\alpha)
    = 2-\alpha$ instead of two streams are broadcast.
  \item Digitizing the overheard interferences $(\eta_1, \eta_2)$ in
    approximately $2(1-\alpha) \log P$ bits.
  \item Broadcasting the digitized interferences $(\hat{\eta}_1,
    \hat{\eta}_2)$ as common message \emph{and} two new private messages
    of $\alpha \log P$ bits each, in the second and third slots. 
\end{itemize}
These ideas will be explored in the rest of the section whereafter the
interpretation of Fig.~\ref{fig:proposed-vs-MAT} will become clear.  Since only $\hv_1$ and $\gv_1$ are involved below, we drop the time indices for convenience.

\subsubsection*{\textbf{Spatial precoding and power allocation}}

As in the MAT alignment, we first superpose the two private signals
as $\xv = \uv + \vv$, except that $\uv$ and $\vv$ are precoded
beforehand. The precoding is specified by the covariance matrices 
\begin{align}
  \Qu \defeq \E[\uv \uv^\H] \quad  \text{and} \quad \Qv \defeq \E[\vv \vv^\H]
\end{align}%
that may depend on the estimates of the current channel. The power constraint is
respected by choosing $\Qu$ and $\Qv$ such that
$\trace[\Qu] + \trace[\Qv] \le P$. 
In particular, we choose $\Qu$ and $\Qv$ in such a way that the
power of the interferences $\eta_1$ and $\eta_2$ is reduced and scales
as $O(P^{1-\alpha})$. To this end,  
\begin{itemize}
  \item for user~$k$, $k=1,2$, we send two streams of messages
    $(W_{k,1}, W_{k,2})$ in two orthogonal directions:
    one perpendicular to the estimated channel of the unintended user, while
    the other one aligned with it, i.e., 
    \begin{align}
      \Qu &= P_1 \Psimgp + P_2 \Psimg, \label{eq:Qu} \\
      \Qv &= P_1 \Psimhp + P_2 \Psimh; 
    \end{align}%
  \item the transmit power in the estimated channel direction is
    such that $P_2 \sim P^{1-\alpha}$, whereas the transmit power in the
    orthogonal direction is $P_1 = P - P_2 \sim P$ for any $\alpha<1$.  
\end{itemize}
With $\Qu$ and $\Qv$ chosen as such, it is readily shown that, for a
given channel realization $\hv$, the power of the interference seen by user~1 is 
\begin{align}
  \sigma_{\eta_1}^2 &\defeq \E_{\vv}(\abs{\hv^\H \vv}^2) \\
  &= \hv^\H \Qv \hv \\
  &= \hv^\H (P_1 \Psimhp + P_2 \Psimh) \hv \\
  &= P_1 \tilde{\hv}^\H \Psimhp \tilde{\hv} + P_2 {\hv}^\H \Psimh {\hv}
  \\ 
  &\le P_1 \norm{\tilde{\hv}}^2 + P_2 \norm{{\hv}}^2.   
\end{align}%
By averaging $\sigma_{\eta_1}^2$ over $\hv$, we have
\begin{align}
 \E[\sigma_{\eta_1}^2] = O(P^{1-\alpha}).
\label{eq:averageINT}
\end{align}%
Due to the symmetry, defining $\sigma_{\eta_2}^2 \defeq
\E_{\uv}(\abs{\gv^\H \uv}^2)$, we
also have $\E[\sigma_{\eta_2}^2] = O(P^{1-\alpha})$.

\subsubsection*{\textbf{Digitizing the overheard interferences
}} 

As in the second phase of the MAT variant, we would like to convey the overheard 
interferences $(\hv^\H\vv, \gv^\H \uv)$ to both receivers. However, unlike the original MAT
scheme where these symbols are transmitted in an
analog fashion, we quantize them and then transmit the digital version.
The rationale behind this choice is as follows. With the precoding and
power allocation as described above, the overheard
interferences have a reduced power~$O(P^{1-\alpha})$, without sacrificing \emph{too much}
received signal power.\footnote{With no CSIT on the current channel, the
only way to reduce the interference power is to reduce the transmit
power, therefore the received signal power.} As a result, we should be
able to compress the interferences, which in turn makes room for
transmission of new symbols. The benefit can be significant when the current
CSIT is nearly perfect. In this case, the analog transmission is no
longer suitable, due to the mismatch between the source (interference)
power and available transmit power. Therefore, a good alternative is to
quantize the interferences and to transmit the encoded symbols. The
number of quantization bits depends naturally on the interference power
that is related to the quality of the current channel state information.  


For simplicity, we suppose that $\eta_1$ and $\eta_2$ are quantized separately. 
Furthermore, let us assume that an $R_{\eta_k}$-bits quantizer is used for $\eta_k$, $k=1,2$.
Hence, we have  
\begin{align}
  \eta_k &= \hat{\eta}_k + \Delta_k 
\end{align}%
where $\hat{\eta}_k$ and ${\Delta_k}$ are respectively the quantized value and the
quantization noise with average distortion $\E[\Abs{\Delta_k}] = D_k$,
$k = 1,2$. 
The index corresponding to $\hat{\etav} \defeq (\hat{\eta}_1, \hat{\eta}_2)$,
represented in $\Retav \defeq R_{\eta_1}+R_{\eta_2}$ bits, is then multicast to both
users. In order not to incur a DoF loss with the quantization, we set the
distortion to the noise level, i.e., $D_1 = D_2 = 1$.  
With the above choices, we can upper-bound the quantization rate $\Retav$ 
\begin{align}
  \Retav &\le \E \biggl( \log\biggl( \frac{\sigma_{\eta_1}^2}{D_1}
  \biggr)\biggr) +
  \E\biggl( \log\biggl( \frac{\sigma_{\eta_2}^2}{D_2} \biggr) \biggr) \\
  &\le \log\Bigl( \E[\sigma_{\eta_1}^2] \Bigr) + \log\Bigl(
  \E[\sigma_{\eta_2}^2] \Bigr) \\
&\le 2 (1-\alpha) \log P + O(1) 
\end{align}%
where the first inequality is from the rate-distortion theorem and the
fact that Gaussian source is the hardest to
compress~\cite{Cover_Thomas}; the second inequality is from the concavity of the
log function and Jensen's inequality; the last one is from \eqref{eq:averageINT}. 

\subsubsection*{\textbf{Multicasting digitized interferences and broadcasting
new private messages}}

The next step is to communicate the digitized interferences~$(\hat{\eta}_1,
\hat{\eta}_2)$, represented approximately in $2(1-\alpha) \log P$ bits, to both users. 
This information is broadcast as common
message in two slots. Meanwhile, new private messages $(W_{1,3}, W_{2,3})$
and $(W_{1,4}, W_{2,4})$ are sent to both users simultaneously in the
second and third slots, respectively. The superposition is illustrated
in Fig.~\ref{fig:proposed-vs-MAT}. 
In the following, we let $(d_\text{c}, d_{\text{p}1}, d_{\text{p}2})$ denote the corresponding DoF per
slot for the common message, private messages for user~1 and user~2,
respectively. It is readily shown that $d_\text{c} = 1-\alpha$. 

\subsubsection*{Decoding}
Each user first decodes the second and
third slots, i.e., receiver~$k$ recovers $(\hat{\eta}_1, \hat{\eta}_2,
{W}_{k,3}, {W}_{k,4})$, $k=1,2$. According to Lemma~\ref{lemma:BC-CM}, and given that
$d_\text{c} = 1-\alpha$, these messages can be decoded reliably as long as 
\begin{align}  
  {d}_{\text{p}k} &\le \alpha, \quad k=1,2.
\end{align}%
Then, receiver~1 has the following equations
\begin{align}
  y &= \hv^\H \uv + \eta_1 + \noisey, \\
  \hat{\eta}_1 &= \eta_1 - {\Delta_1}, \\
  \hat{\eta}_2 &= \eta_2 - {\Delta_2} = \gv^\H \uv - {\Delta_2},
\end{align}%
from which an equivalent $2\times2$ MIMO channel is obtained
\begin{align}
	\tilde{\yv} \triangleq \begin{bmatrix} y -\hat{\eta}_1
          \\\hat{\eta}_2 \end{bmatrix}  = { \Sm } {\uv} +
            {\begin{bmatrix} \noisey + {\Delta_1} \\ -
              {\Delta_2} \end{bmatrix}} \label{eq:MIMO1}
\end{align}
where the noise $\bv \defeq [\noisey + {\Delta_1} \ -{\Delta_2}]^\T$
depends on the input signals in general. Similarly, receiver~2 has
\begin{align}
	\tilde{\zv} \triangleq \begin{bmatrix} \hat{\eta}_1 \\ z - \hat{\eta}_2
        \end{bmatrix}  = { \Sm } {\vv} +
            {\begin{bmatrix} - {\Delta_1} \\ \noisez + {\Delta_2}
            \end{bmatrix}}. 
              \label{eq:MIMO2}
\end{align}
In order to recover the messages $W_{\text{mimo},1} \defeq (W_{1,1}, W_{1,2})$ encoded in $\uv$ or $W_{\text{mimo},2} \defeq (W_{2,1}, W_{2,2})$ encoded in $\vv$, each user performs conventional MIMO
decoding of the above equivalent channel. Let $\Rmimo$ denote the achievable rate of the equivalent
channel~\eqref{eq:MIMO1} in bits per channel use and $\dmimo$ the
corresponding DoF. We can lower-bound $\Rmimo$ as follows: 
\begin{align}
  \Rmimo &= \E\bigl( I(U; \tilde{Y} \cond S = \Sm) \bigr) \\
  &= \E\bigl( I(\Sm U; \tilde{Y}) \bigr) \label{eq:invertible} \\
  &= \E\bigl( h(\Sm U) - h(\Sm U \cond \tilde{Y}) \bigr) \\
  &= \E\bigl( h(\Sm U) - h(\Noisey + \Delta_1, -\Delta_2 \cond \tilde{Y})
  \bigr)\\
  &\ge \E\bigl( h(\Sm U) - h(\Noisey + \Delta_1, -\Delta_2) \bigr)
  \label{eq:cond}\\
  &\geq \E\bigl( \log\det[\Sm \Qu \Sm^\H] \bigr) - \log(1+D_1) \\
  &\qquad - \log(D_2) \label{eq:ind}\\
  &= \E\bigl( \log\det[\Qu] \bigr)  + \E\bigl( \log\det[\Sm \Sm^\H]
  \bigr) \\
  &\qquad - \log(1+D_1) - \log(D_2)  \\
  &= \log(P_1 P_2) + O(1) \\
  &= (2-\alpha) \log(P) + O(1)
\end{align}%
where \eqref{eq:invertible} is from the fact that $\Sm$ is invertible
almost surely and therefore the linear transformation is
information-lossless; \eqref{eq:cond} holds since conditioning does not
increase differential
entropy; \eqref{eq:ind} follows because $\uv$ is Gaussian, then by
noticing that $E+\Delta_1$ and $\Delta_2$ are independent with the
corresponding differential entropies maximized by Gaussian distribution.
Finally, in three slots, user~$k$, $k=1,2$, can recover the messages
$(W_{k,1}, W_{k,2})$ sent in the
equivalent MIMO channel corresponding to the MAT alignment as well as
two fresh messages $(W_{k,3}, W_{k,4})$, 
from which the average DoF per user per channel use is
\begin{align}
  d_{\text{sym}} &= \frac{\dmimo + 2 {d}_{\text{p}k}}{3} = \frac{2 - \alpha + 2\alpha}{3} = \frac{2 + \alpha}{3}. 
\end{align}%
This concludes the achievability of the whole region given by~\eqref{eq:region} and Fig.~\ref{fig:DoF-Region}.

\begin{remark}
  By removing the private messages, one can send the common message in a
  higher rate~(corresponding to $d_\text{c}=1$ instead of
  $d_\text{c}=1-\alpha$) and thus shorten the
  communication~($1+2(1-\alpha)$ slots instead of $3$
  slots). This is the original idea reported in \cite{SubmittedISIT}
  that provides an achievable DoF of $\frac{2-\alpha}{3-2\alpha}$. Inspired by the gap
  between this DoF and the upper bound given by the converse
  \begin{align}
    \frac{2-\alpha}{3-2\alpha} \quad \text{versus} \quad
    \frac{2+\alpha}{3} = \frac{2-\alpha  {\color{red}\, +\, 2\alpha}}{3 -
    2\alpha  {\color{red}\,+\,2\alpha}}, 
  \end{align}%
  a natural question arose: \emph{Can we convey $2\alpha$ more symbols
  per user by extending the transmission by $2\alpha$ channel uses, i.e., in total over three channel
  uses?} It turned out that it is possible by exploiting the current
  CSI, according to Lemma~\ref{lemma:BC-CM}. 
\end{remark}

\begin{figure}
 \centering
\includegraphics[width=0.48\textwidth]{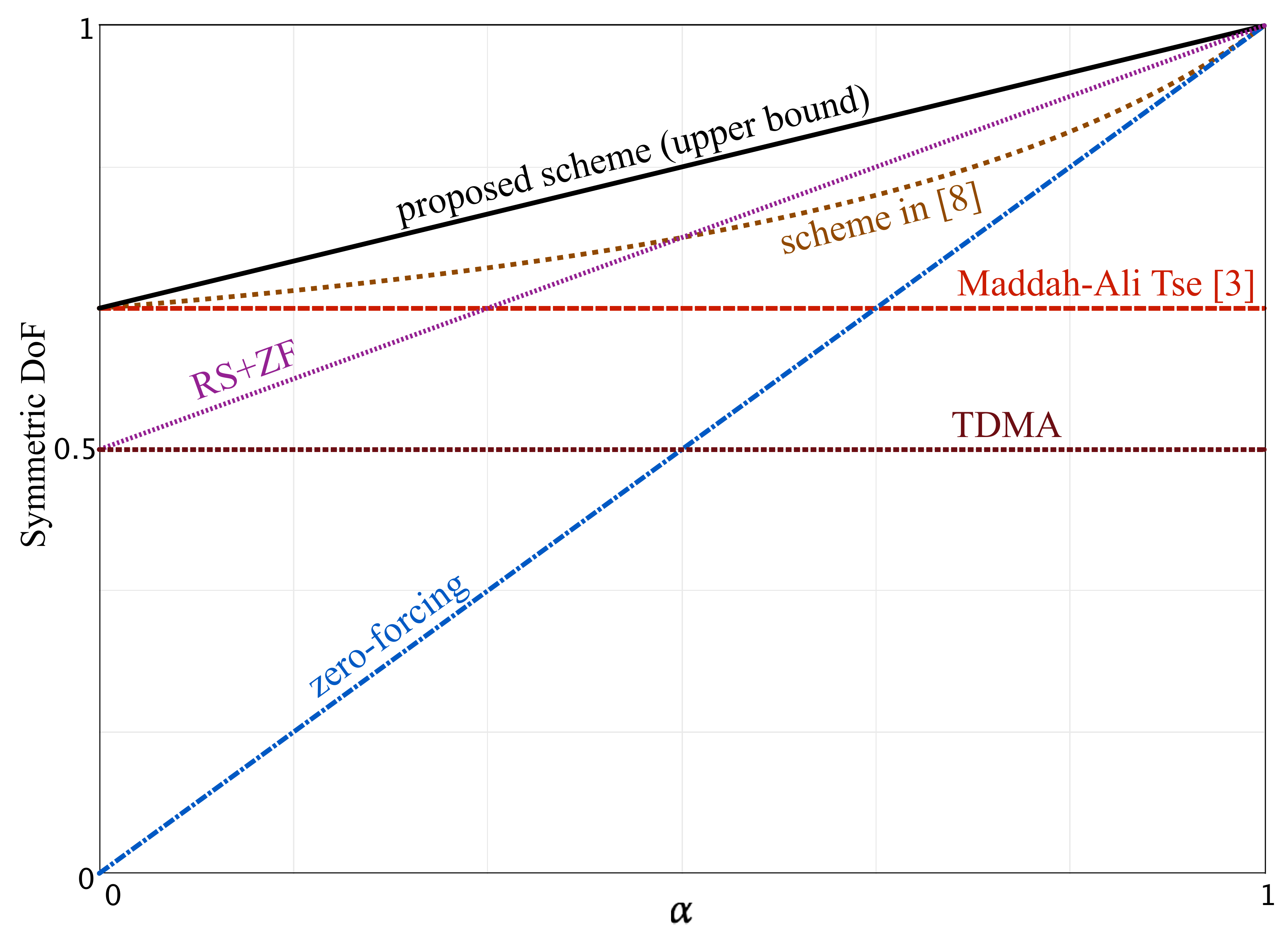}
\caption{Comparison of the achievable DoF between the proposed scheme and the zero-forcing and MAT alignment as a function of $\alpha$.}
\label{fig:DoF}
\end{figure}
 
In Fig.~\ref{fig:DoF}, we compare the achievable DoF of different
schemes.  The TDMA~(time sharing between single-user communications)
requires neither the current nor the delayed CSIT and achieves a
DoF of $\frac{1}{2}$. The ZF precoding only exploits the current CSIT
with a DoF of $\alpha$, while the MAT scheme only exploits the delayed
CSIT with a DoF of $\frac{2}{3}$. The scheme ``RS+ZF''~(Rate-Splitting and ZF precoding)
is from equally time sharing between the corner points $(1,\alpha)$ and $(\alpha,1)$. 
It only exploits the current CSIT with a DoF of
$\frac{1+\alpha}{2}$. 
Note that when $\alpha$ is close to $0$, the estimation of current CSIT is bad and
therefore useless. In this case, the optimal scheme is the 
MAT alignment. On the other hand, when $\alpha\ge1$, the estimation is good and
the interference at the receivers due to the imperfect estimation is
below the noise level and thus can be neglected as far as the DoF is
concerned. In this case, delayed CSIT is useless and even ZF with the
estimated current CSIT is
asymptotically optimal, achieving a DoF of $1$ per user.
Our result reveals that strictly larger DoF than
$\max\{\frac{2}{3}, \alpha\}$ can be obtained by exploiting both the
imperfect current CSIT and the perfect delayed CSIT in an intermediate
regime $\alpha\in(0,1)$. 

In the appendix, we provide the exact achievable rate region. Some
examples of the achievable sum rates with Rayleigh fading are shown in
Fig.~\ref{fig:ergodic-alpha}~and Fig.~\ref{fig:ergodic}.\footnote{Note
that the parameters are fixed according to the choices given in the appendix without
optimization.} In Fig.~\ref{fig:ergodic-alpha}, we plot the sum rate performance of our
sum-DoF optimal scheme for different values of $\alpha$. We observe that as the
quality of channel knowledge increases~($\alpha\to1$), the sum rate
improves significantly with the sharper slope promised by the DoF
result. Note that the performance with $\alpha=0$ nearly corresponds
to the sum rate achieved by MAT~(cf.~Fig.~\ref{fig:ergodic}). In
Fig.~\ref{fig:ergodic}, we compare our sum-DoF optimal scheme with 
different strategies: 
MAT, ZF, TDMA, as well as ``RS+ZF'' in terms of the ergodic sum rate for
$\alpha=0.5$.  For this quality of the current CSIT, ZF performs
substantially worse than the others, achieving the pre-log of one. With
the same value of DoF as ZF, the TDMA scheme performs much better than the ZF
scheme, since full transmit power can be used without causing
interference. Note that the current CSIT is exploited in the TDMA scheme
in such a way that the signal is beamformed in the direction of the estimated channel. The
sum rate with MAT, RS+ZF, and the proposed scheme increases with a slope
of $\frac{4}{3}$, $\frac{3}{2}$, and 
$\frac{5}{3}$, respectively, as expected from the DoF results.

\begin{figure}
 \centering
\includegraphics[width=0.48\textwidth]{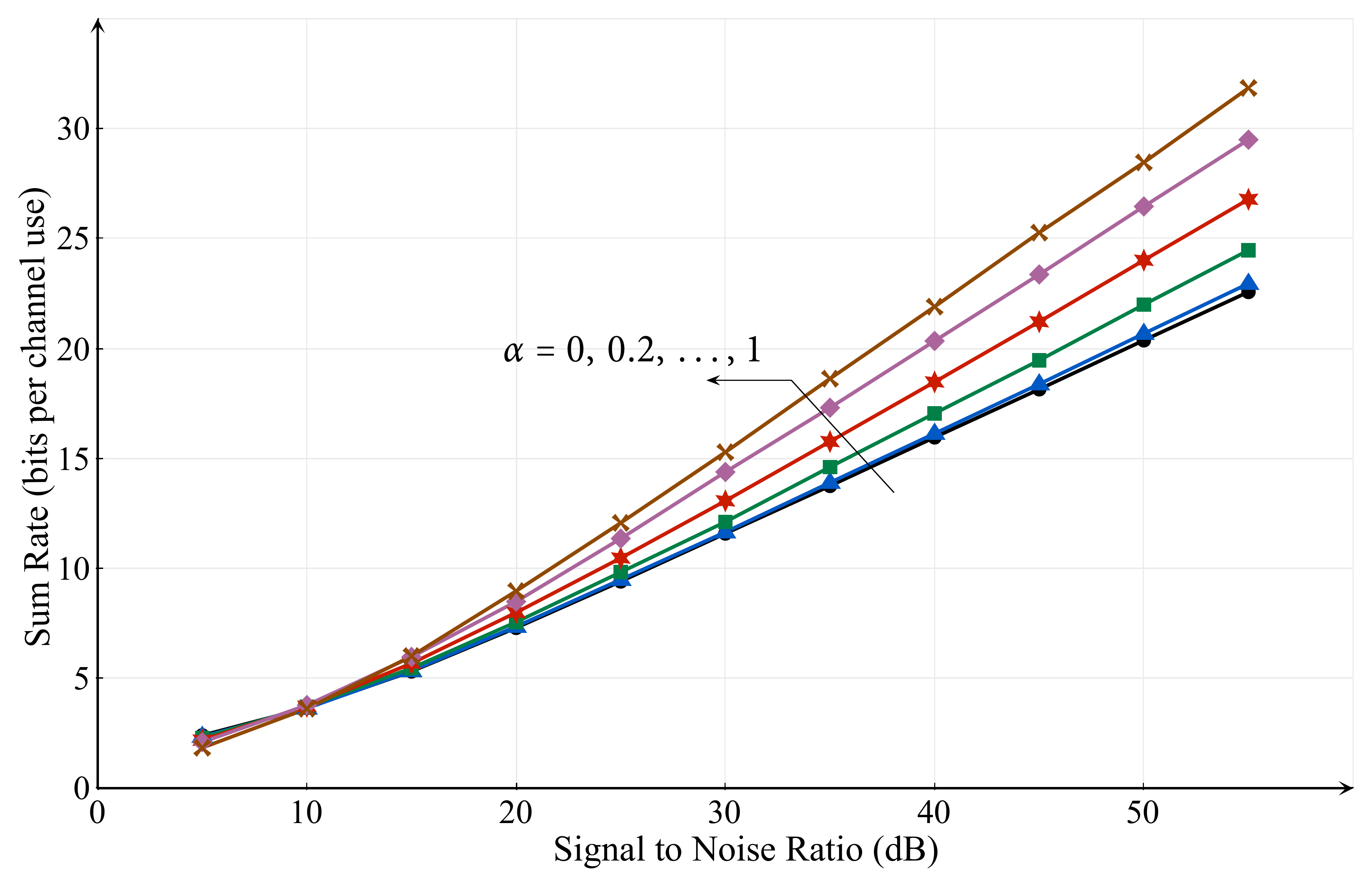}
\caption{The achievable ergodic sum-rate of the proposed scheme with
Rayleigh fading, for $\alpha=0, 0.2, \ldots, 1$.}
\label{fig:ergodic-alpha}
\end{figure}

\begin{figure}
 \centering
\includegraphics[width=0.48\textwidth]{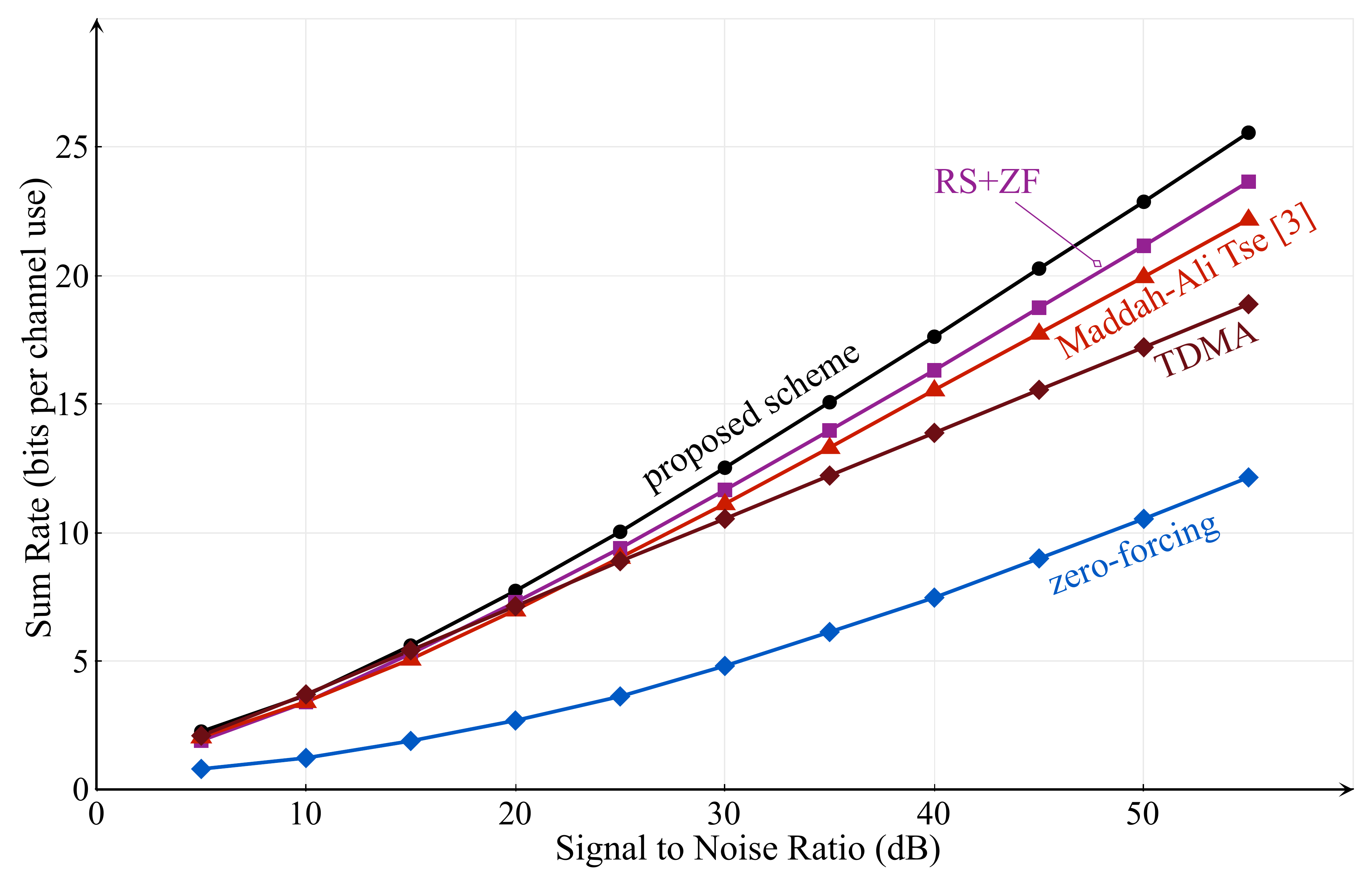}
\caption{The achievable ergodic sum-rate of the
proposed sum-DoF optimal scheme, rate-splitting scheme, TDMA, zero-forcing, and MAT alignment. We set $\alpha=0.5$.}
\label{fig:ergodic}
\end{figure}

\section{Discussions}
\label{sec:extension}

\subsection{DoF with common message}
\label{sec:DoF-common}

The main result of this paper can be extended trivially to the case with
common message. 
\begin{corollary}
  Let $(d_0, d_1, d_2)$ be the degrees of freedom related to the common
  message, private message for user~1, and private message for user~2,
  respectively. Then, the optimal DoF region 
  is  characterized by 
  \begin{subequations}
  \begin{align}
    d_0 + d_1 &\le 1, \label{eq:d0+d1}\\
    d_0 + d_2 &\le 1, \label{eq:d0+d2}\\
    2d_0 + d_1 + 2 d_2 &\le 2 + \alpha, \label{eq:2d0+d1+2d2}\\
    2d_0 + 2 d_1 + d_2 &\le 2 + \alpha. \label{eq:2d0+2d1+d2} 
  \end{align}%
  \label{eq:region-CM}
  \end{subequations}
\end{corollary}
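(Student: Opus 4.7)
The plan is to dispatch the corollary using the machinery of Theorem~\ref{theorem:DoF} and Lemma~\ref{lemma:BC-CM}, without introducing any new technique. I will handle converse and achievability separately.

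For the converse, \eqref{eq:d0+d1} and \eqref{eq:d0+d2} are immediate single-user bounds: user~$k$ must decode both $W_0$ and $W_k$ through a scalar receiver of pre-log one. For the weighted bounds, I would rerun the chain \eqref{eq:Fano1}--\eqref{eq:hYZ-2hZ} with the auxiliary $T_i$ replaced by $T_i' \defeq (Y^{i-1}, Z^{i-1}, S^{i-1}, \hat{S}^{i}, W_0, W_2)$. Independence of $W_0, W_1, W_2$ gives $H(W_1) = H(W_1 \cond W_0, W_2)$, so Fano yields $n R_1 \le \sum_i h(Y_i, Z_i \cond T_i', S_i) + n \epsilon_n$ exactly as in \eqref{eq:conv1}. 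Applying Fano to the pair $(W_0, W_2)$ at user~2 instead yields $n (R_0 + R_2) \le \sum_i \bigl( h(Z_i \cond S_i) - h(Z_i \cond T_i', S_i) \bigr) + n \epsilon_n$, the direct analogue of \eqref{eq:R2} with $R_0 + R_2$ on the left. Summing the first with twice the second reproduces the exact combination of entropies bounded in \eqref{eq:R1+2R2}--\eqref{eq:hYZ-2hZ}, and the extremal-inequality argument applies verbatim because $X_i$ is still a deterministic function of $(W_0, W_1, W_2, S^{i-1}, \hat{S}^i)$ and the Markov chain~\eqref{eq:Markov} continues to make $S_i$ conditionally independent of $X_i$ given $T_i'$. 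This delivers $R_1 + 2 R_0 + 2 R_2 \le (2+\alpha) \log P + O(1)$, i.e., \eqref{eq:2d0+d1+2d2}; the twin bound \eqref{eq:2d0+2d1+d2} follows by swapping the roles of the two users.

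For achievability, I would enumerate the vertices of the polytope \eqref{eq:region-CM} and invoke time-sharing. The only vertex strictly inside the positive orthant is $(d_0, d_1, d_2) = (1-\alpha, \alpha, \alpha)$, which is precisely what Lemma~\ref{lemma:BC-CM} achieves via superposition of a common stream at DoF $1-\alpha$ and two private streams at DoF $\alpha$. The vertices on the face $d_0 = 0$ are exactly the corner points of Theorem~\ref{theorem:DoF}, already established in Section~\ref{sec:proposed-scheme}. The remaining nontrivial vertex $(1, 0, 0)$ is pure multicast of $W_0$, achievable at DoF one since either single-user MISO channel has pre-log one; the single-user vertices $(0, 1, 0)$ and $(0, 0, 1)$ are trivial.

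The main obstacle, in my view, is one of bookkeeping rather than of substance: one must resist the tempting shortcut of folding $W_0$ into the private message of each user and invoking Theorem~\ref{theorem:DoF} directly on the pair $(d_0 + d_1, d_0 + d_2)$, because that would require the strictly stronger bound $3 d_0 + d_1 + 2 d_2 \le 2 + \alpha$ and therefore overshoot. The correct move is to treat $W_0$ as a genuine common message, entering the conditioning $T_i'$ only once, alongside $W_2$, on the converse side, and entering the achievable scheme of Lemma~\ref{lemma:BC-CM} as a true superposition layer rather than as two duplicate private streams.
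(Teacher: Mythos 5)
Your proposal is correct and tracks the paper's own proof almost exactly: the paper's converse substitutes $W_2 \mapsto W_2' \defeq (W_0, W_2)$ and $R_2 \mapsto R_2' \defeq R_0 + R_2$ throughout Section~\ref{sec:converse}, which is precisely your modified conditioning $T_i' = (Y^{i-1}, Z^{i-1}, S^{i-1}, \hat{S}^i, W_0, W_2)$, and the paper's achievability likewise enumerates the seven vertices (the three extreme points, the three private points on the $d_0 = 0$ face, and the mixed point $(1-\alpha,\alpha,\alpha)$) and invokes time-sharing. Your closing remark about why one cannot simply fold $W_0$ into both private messages and apply Theorem~\ref{theorem:DoF} to $(d_0+d_1, d_0+d_2)$ is a useful caution but not a proof step; the paper handles this implicitly by inserting $W_0$ only on user 2's side of the degraded chain.
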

\vspace{-0.5\baselineskip}
\begin{proof}
The converse follows the same lines as in the case without common
message, presented in Section~\ref{sec:converse}. To obtain
\eqref{eq:d0+d2} and \eqref{eq:2d0+d1+2d2}, we replace $W_2$ by
${W}'_2\defeq (W_0,W_2)$ and $R_2$ by ${R}'_2 \defeq R_0+R_2$
throughout Section~\ref{sec:converse} and carry out exactly the same
steps. Then, \eqref{eq:d0+d1} and \eqref{eq:2d0+2d1+d2} follow straightforwardly by
interchanging the roles of user~1 and user~2 as well as the symmetry
between the two users.  

Note that the region is a polyhedron and completely characterized by the vertices
in terms of $(d_0, d_1, d_2)$: 
\begin{itemize}
  \item extreme points: $(1, 0, 0)$, $(0, 1, 0)$, $(0, 0, 1)$,
  \item private points: $(0, 1, \alpha)$, $(0, \alpha, 1)$,
    $\left(0,
    \frac{2+\alpha}{3}, \frac{2+\alpha}{3}\right)$, and  
  \item mixed point: $(1-\alpha, \alpha, \alpha)$
\end{itemize}
which are all achievable with the proposed scheme. Thus, the entire
region is achievable by time sharing between the vertices. 
\end{proof}

\subsection{Imperfect delayed CSI: Limited feedback}

In most practical scenarios, delayed CSIT is obtained through feedback
channel and the current state is then predicted based on the delayed
CSIT. Due to various reasons, perfect delayed CSIT may not be available. 
For instance, the limited feedback rate may incur a
distortion on the channel coefficients. In the following, we take a look
at the impact of the imperfect delayed CSIT on the achievable DoF of the proposed scheme. 

First, let us assume that the channel state $\Sm_{t-1}$ is quantized before being sent back to
the transmitter~(and to the other receiver). The quantization model is  
\begin{align}
  \Sm_{t-1} &= \bar{\Sm}_{t-1} + \breve{\Sm}_{t-1}
\end{align}%
where each entry of the quantization noise $\breve{\Sm}_{t-1}$ has the same
variance $\sigma^2_{\text{FB}}$. 
We introduce a parameter $\beta$ to characterize the precision of the
quantization. As the definition of $\alpha$, we define $\beta$ as the
power exponent of the quantization noise\footnote{From the rate-distortion function, it is not difficult to relate $\beta$ to the resource required for the CSI feedback, i.e., the feedback DoF.}, i.e., 
\begin{align}
  \beta &\defeq \min\left\{ -\frac{\log \sigma^2_{\text{FB}}}{\log P},\, 1
  \right\}.
\end{align}%

Due to the lack of perfect delayed CSIT, instead of using $\Sm^{t-1}$ to
predict $\Sm_t$ for the precoding and using $\Sm_{t-1}$ to perform the
MAT alignment, the transmitter now predicts the quantized state $\bar{\Sm}_t$ with
the past quantized state $\bar{\Sm}^{t-1}$ and uses $\bar{\Sm}_{t-1}$ for the alignment.
Therefore, although the actual interference seen by the receivers is
$(\hv^\H \vv, \gv^\H \uv)$, the transmitter only has access to a noisy
version of it $\etav = (\bar{\hv}^\H \vv, \bar{\gv}^\H \uv)$. Receiver~1
has the following equations
\begin{align}
  y &= \hv^\H \uv + \hv^\H \vv + \noisey = \hv^\H \uv + \eta_1 +
  (\hv-\bar{\hv})^\H \vv + \noisey,\IEEEeqnarraynumspace  \label{eq:new_y1}\\
  \hat{\eta}_1 &= \eta_1 - {\Delta_1}, \\
  \hat{\eta}_2 &= \eta_2 - {\Delta_2} = \bar{\gv}^\H \uv - {\Delta_2}. 
\end{align}%
The power of $\etav$ is $\bar{\hv}^\H\Qv\bar{\hv} + \bar{\gv}^\H \Qu
\bar{\gv}$ that depends on the ``precision'' of the prediction from
$\bar{\Sm}^{t-1}$ to $\bar{\Sm}_{t}$. It can be shown\footnote{Without
going into the details, we can see that the following Markov chain holds
$\bar{\Sm}^{t-1}\leftrightarrow{\Sm}^{t-1}\leftrightarrow{\Sm}_{t}\leftrightarrow\bar{\Sm}_t$.
The prediction error from $\bar{\Sm}^{t-1}$ to $\bar{\Sm}_{t}$ is now the 
aggregation of two effects: the channel variation, characterized by
$P^{-\alpha}$, and the quantization error due to limited feedback rate,
characterized by $P^{-\beta}$. Hence, we have the power exponent of the
aggregated error $\alpha' = \min\{\alpha, \beta\}$. } that the power
exponent of this prediction error is $\alpha'\defeq\min\{\alpha, \beta\}$ where
$\alpha$ is the power exponent of the prediction error when perfect
delayed CSIT is present, i.e., predicting ${\Sm}_{t}$ from ${\Sm}^{t-1}$.  
Therefore, the achievable DoF of the proposed scheme would be
$\frac{2+\alpha'}{3}$ without taking into account the ``residual
interference'' $(\hv-\bar{\hv})^\H \vv$ in \eqref{eq:new_y1}. In fact, this
interference costs a DoF loss of $1-\beta$ over three slots, yielding the new DoF per
user 
\begin{align}
  d(\alpha,\beta) &= \frac{2+\alpha'-(1-\beta)}{3} \\
  &= \frac{1+\min\left\{ \alpha, \beta \right\} +
  \beta}{3}, \quad \alpha,\beta\in[0,1]. 
\end{align}%

\begin{figure}[t]
 \centering
\includegraphics[width=0.48\textwidth]{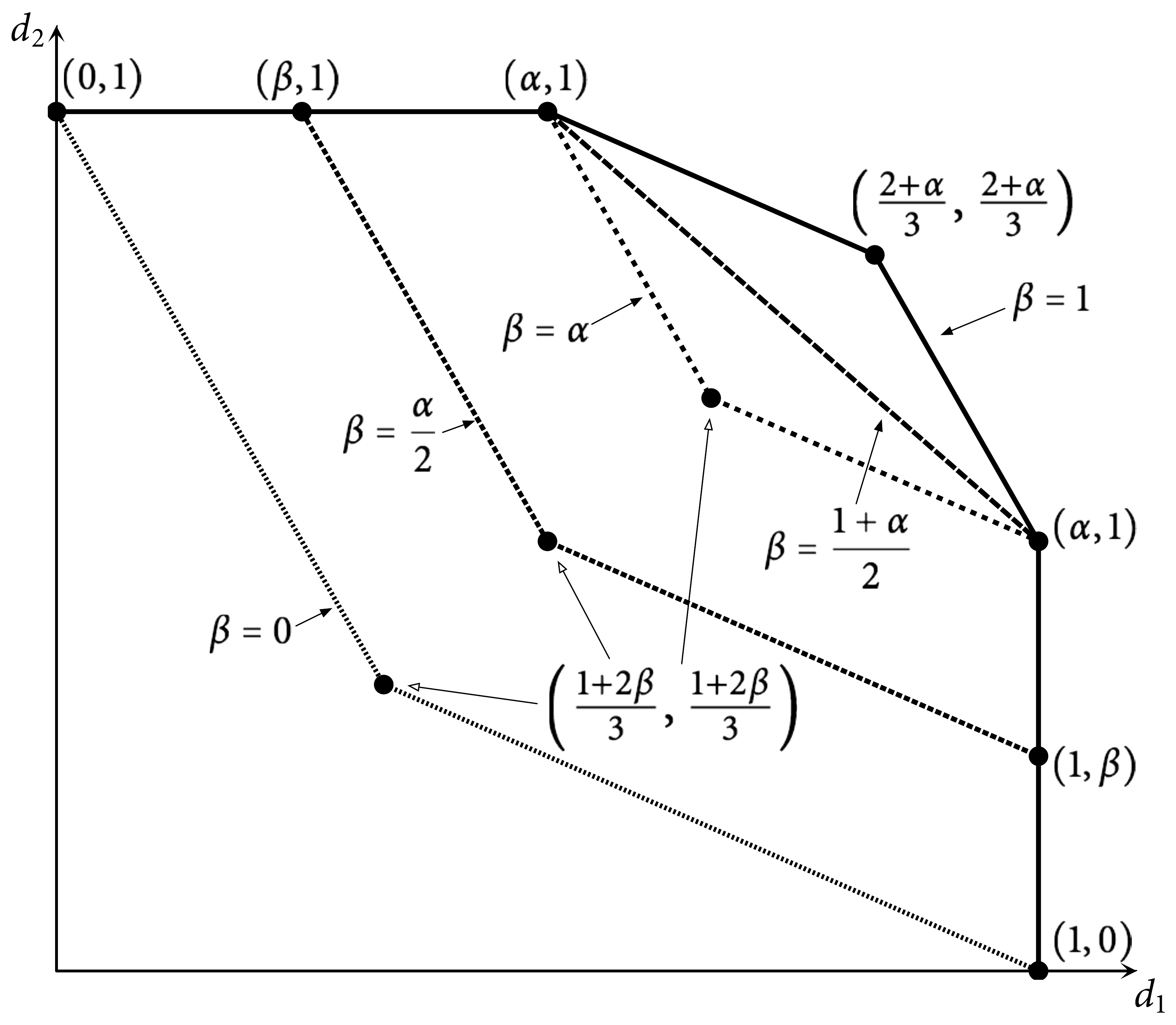}
\caption{Impact of imperfect delayed CSIT on the achievable DoF with the
proposed scheme. We fix
$\alpha=0.5$ and vary $\beta$ from $1$ to $0$. }
\label{fig:DoF_beta}
\end{figure}

As in the case with perfect delayed CSIT, the DoF pairs $(1, \alpha')$
and $(\alpha', 1)$ are achievable without the MAT alignment. An example
of the DoF region is shown in Fig.~\ref{fig:DoF_beta}, where we fix the
value $\alpha$ and vary $\beta$ from $1$ to $0$. As shown in the figure,
when $\beta = 1$, the DoF region is unchanged. When $\beta$ is reduced
to $\frac{1+\alpha}{2}$, the symmetric DoF point can be achieved by time
sharing between the two corner points $(1,\alpha)$ and $(\alpha,1)$.
Delayed CSIT is not beneficial any more with our scheme. As $\beta$ continues to
diminish to $\alpha$, the symmetric DoF keeps dropping while the
corner points remain still. At this point, using MAT alignment creates
more interference than resolving it. When $\beta$ goes below $\alpha$,
it becomes the dominating source of interference. The corner points
become $(1, \beta)$ and $(\beta, 1)$. The above analysis reveals that
even imperfect delayed CSIT can be beneficial with our scheme, as long as the
feedback accuracy $\beta$ is larger than $\frac{1+\alpha}{2}$.
However, it is unclear whether this naive extension to the
imperfect delayed CSIT case is optimal. Finding optimal schemes with
imperfect delayed CSIT remains an open problem and is out of the
scope of this paper.

\subsection{Bandwidth-limited Doppler process}

The main result on the achievable DoF has been presented in terms of an
artificial parameter $\alpha$, denoting the speed of decay of the
estimation error $\sigma^2\sim P^{-\alpha}$ in the current CSIT. In this
section, we provide an example showing the practical interpretation of
this parameter. Focusing on receiver 1 due to symmetry, we describe the
fading process, channel estimation, and feedback scheme as follows: 
\begin{itemize}
  \item The channel fading $\hv_t$ follows a Doppler
    process with power spectral density $S_h(w)$. The channel coefficients
    are strictly band-limited to $[-F,F]$ with $F=\frac{v f_c 
    T_f}{c} < \frac{1}{2}$ where $v, f_c, T_f$, and $c$ denote the mobile speed
    in $\text{m}/\text{sec}$, the carrier frequency in $\text{Hz}$, the
    slot duration in $\text{sec}$, the light speed in
    $\text{m}/\text{sec}$, respectively.
  \item The channel estimation is done at the receivers side with
    pilot-based downlink training. At slot $t$, receiver 1
    estimates $\hv_{t}$ based on a sequence of noisy observations
    $\bigl\{\sv_{\tau} = \sqrt{P} \hv_{\tau} + \nuv_{\tau} \bigr\}$ up to $t$, where 
    $\nuv_t\sim\CN(0, \Id)$ is the AWGN. The estimate is denoted by $\bar{\hv}_t$ with  
    \begin{align}
      \hv_t &= \bar{\hv}_t + \breve{\hv}_t. \label{eq:estimation}
    \end{align}%
    Under this model, the estimation error vanishes as
    $\E\bigl(\norm{\breve{\hv}_t}^2\bigr) \sim P^{-1}$.
  \item At the end of slot $t$, the noisy observation 
    $\sv_t$ is sent to the transmitter and receiver 2 over a noise-free channel. 
    At slot $t+1$, based on the noisy observation $\{\sv_{\tau}\}$ up to $t$, 
     the transmitter and receiver 2 acquire the prediction ${\hat{\hv}}_{t+1}$ of $\hv_{t+1}$ and estimation $\tilde{\hv}_t$ of $\hv_{t}$. The corresponding prediction model is  
    \begin{align}
      {\hv}_{t} &= {\hat{\hv}}_{t} +
      {\tilde{\hv}}_{t}. \label{eq:prediction}
    \end{align}%
    From \cite[Lemma 1]{caire2010multiuser}, we have 
    $\E\bigl(\norm{{\tilde{\hv}}_t}^2\bigr) \sim P^{-(1-2F)}$. 
\end{itemize}

In this channel with imperfect delayed CSIT, we can still apply the
proposed scheme and analysis in exactly the same way as in the previous
section with $\alpha=1-2F$ and $\beta=1$.  

\subsection{Non-ergodic fading~(delay-limited communications)}

The DoF results have been derived based on the ergodic rates. For
non-ergodic fading processes, the DoF can be redefined in the same
manner as the definition of multiplexing gain in \cite{Zheng_Tse}. This
approach has been reported in \cite{SubmittedISIT}. Following the
footsteps in
\cite{SubmittedISIT}, it can be shown that the non-ergodic DoF coincides with the ergodic DoF.

\section{Conclusions}
\label{sec:conclusions}

A scheme achieving the optimal degrees of freedom region in a two-user
MISO broadcast channel has been presented. The approach optimally
exploits the combination of  delayed channel feedback together with
imperfect current CSIT. In practical scenarios, the current CSIT may be
obtained from a prediction based on the delayed CSIT samples. When the
quality of current CSIT is poor, the proposed scheme coincides with the
previously reported MAT space-time alignment, whereas as the current CSIT
prediction quality becomes ideal, the scheme relies on standard linear
precoding. In between these extremal regimes, the proposed strategy
advocates interference quantization followed by feedback.
Generalizations of the proposed study to the MIMO case, multi-user case,
and imperfect delayed CSIT case remain challenging yet interesting open
problems. 

\appendix

\newcommand{\Xim}{\pmb{\Xi}}
\newcommand{\Gammam}{\mat{\Gamma}}
\newcommand{\Xmc}{\Xc_{\text{mc}}}
\newcommand{\Tu}{\Thetam}
\newcommand{\Tpi}{\Xim}
\newcommand{\Tpj}{\Gammam}
\newcommand{\Tv}{\Phim}
\newcommand{\Tmc}{\Omegam}
\newcommand{\Lu}{\Lambdam_u}
\newcommand{\Lpi}{\Lambdam_{\text{p}1}}
\newcommand{\Lpj}{\Lambdam_{\text{p}2}}
\newcommand{\Lpk}{\Lambdam_{\text{p}k}}
\newcommand{\Lv}{\Lambdam_v}
\newcommand{\Lzero}{\Lambdam_\text{c}}
\newcommand{\Lmc}{\Lambdam_{\text{mc}}}
\newcommand{\Wmc}{W_{\text{mc}}}

\subsection{Proof of Lemma \ref{lemma:logdet}} \label{app:proof-lemma-logdet}
  First, we show \eqref{eq:tmp89} as follows.
  \begin{align}
    \MoveEqLeft \E_{S_i \vert \hat{S}_i} \left( \log(1+\hv_i^\H \Km
    \hv_i) \right)  \\
    &\le \E_{S_i \vert \hat{S}_i} \left( \log(1+ \lambda_1 \Norm{\hv_i}) \right)  \\
    &\le \log(1+ \lambda_1 \Norm{\hat{\hv}_i} + m\sigma^2\lambda_1 )
    \label{eq:tmpconc} \\
    &= \log(1+ \lambda_1 \Norm{\hat{\hv}_i}) + \log\biggl(1 +
    \frac{m\sigma^2\lambda_1}{1+\Norm{\hat{\hv}_i} \lambda_1} \biggr)\\    
    &\le \log(1+ \lambda_1 \Norm{\hat{\hv}_i}) + \log\biggl(1 +
    \frac{m\sigma^2}{\Norm{\hat{\hv}_i}} \biggr)
  \end{align}%
  where \eqref{eq:tmpconc} is from the concavity of the log function. 
  
  Then, to derive \eqref{eq:tmp99}, let us define $\hat{\psiv}\defeq \Vm^\H
  \hat{\gv}_i$ and $\tilde{\psiv} \defeq \Vm^\H
  \tilde{\gv}_i$ with $\Vm$ being the unitary matrix containing the
  eigenvectors of $\Km$, i.e., $\Km = \Vm \diag[\lambda_1,\ldots, \lambda_m]
  \Vm^\H$. From the isotropic assumption, $\tilde{\psiv}$ has the same
  distribution as $\tilde{\gv}_i$ and is also isotropic. 
  Since the distribution of the vector $\tilde{\psiv}$ is invariant
  under unitary transformations, it follows that the distribution of each
  scalar $\tilde{\psi}_l$ in $\tilde{\psiv}$ is invariant under complex
  scalar rotations. Thus, $\tilde{\psi}_l$, $l=1,\ldots,m$, can be represented by
  $A_l e^{j\theta_l}$ where $A_l\defeq \abs{\tilde{\psi}_l}$ is
  independent of $\theta_l$ that is uniformly distributed in $[0, 2\pi)$. We
  need the following lemma for the proof. 
  \begin{lemma}\label{lemma:BA}
    Let $\theta$ be a random variable uniformly distributed in
    $[0, 2\pi)$. Then, we have 
    \begin{align}
      \E_\theta\bigl(\log\bigl(\abs{B + A e^{j\theta}}^2\bigr)\bigr)
       &= \log \bigl(\max\bigl\{ \abs{A}^2, \abs{B}^2 \bigr\}\bigr). 
    \end{align}%
  \end{lemma}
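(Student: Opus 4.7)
The plan is to reduce the claim to a classical logarithmic integral on the unit circle. First I would use the rotation invariance of the uniform distribution on $[0,2\pi)$ together with the factorization $B + A e^{j\theta} = e^{j\arg B}\bigl(\abs{B} + \abs{A} e^{j(\theta + \arg A - \arg B)}\bigr)$ to reduce, without loss of generality, to the case where $A$ and $B$ are nonnegative reals. In that case,
\begin{equation*}
\abs{B + A e^{j\theta}}^2 = A^2 + B^2 + 2 A B \cos\theta.
\end{equation*}
The degenerate cases $A = 0$ or $B = 0$ give the claim by inspection, so assume $A, B > 0$; by the symmetry $A \leftrightarrow B$, I may further assume $A \ge B$. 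Factoring out $A^2$ then reduces the problem to evaluating
\begin{equation*}
\frac{1}{2\pi}\int_0^{2\pi} \log\bigl(1 + r^2 + 2 r \cos\theta\bigr)\, d\theta,
\end{equation*}
where $r \defeq B/A \in (0, 1]$.

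The crux of the argument is the classical identity
\begin{equation*}
\frac{1}{2\pi}\int_0^{2\pi} \log \abs{1 - r e^{j\theta}}^2\, d\theta \;=\; 2 \max\{0,\, \log r\}, \qquad r \ge 0,
\end{equation*}
which I would cite as a standard consequence of Jensen's formula: applied to the polynomial $f(z) = 1 - r z$, whose single root $z = 1/r$ lies outside the closed unit disk iff $r \le 1$, the integral vanishes in that range and equals $2 \log r$ otherwise. Since $\abs{1 - r e^{j\theta}}^2 = 1 + r^2 - 2 r \cos\theta$ and the integrand is $2\pi$-periodic, the translation $\theta \mapsto \theta + \pi$ turns the minus sign into a plus without altering the integral. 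Hence for $r \le 1$ the integral of interest equals zero, and combining everything,
\begin{equation*}
\E_\theta\bigl(\log\abs{B + A e^{j\theta}}^2\bigr) \;=\; \log A^2 \;=\; \log \max\{A^2,\, B^2\}.
\end{equation*}
The complementary case $A < B$ is handled identically by factoring out $B^2$ instead.

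The only substantive step is the Jensen-type identity above, which I would invoke from complex analysis rather than rederive; everything else is routine manipulation enabled by the rotation invariance of the uniform phase.
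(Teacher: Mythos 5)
Your proof is correct and arrives at the same conclusion, but it takes a somewhat different route from the paper's. The paper assumes $A,B\ge 0$ by the same rotation argument, expands $\abs{B+Ae^{j\theta}}^2=A^2+B^2+2AB\cos\theta$, and then invokes the closed-form integral $\int_0^1\log\bigl(a+b\cos(2\pi t)\bigr)\,\mathrm{d}t=\log\frac{a+\sqrt{a^2-b^2}}{2}$ for $a\ge b>0$, from which $\log\max\{A^2,B^2\}$ drops out after simplifying $\sqrt{(A^2+B^2)^2-(2AB)^2}=\abs{A^2-B^2}$. You instead factor out the larger modulus, reduce to $\frac{1}{2\pi}\int_0^{2\pi}\log\abs{1-re^{j\theta}}^2\,\mathrm{d}\theta$ with $r\le 1$ (after the harmless shift $\theta\mapsto\theta+\pi$), and read off the answer from Jensen's formula applied to $f(z)=1-rz$. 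The two are close cousins --- the paper's integral identity is itself most cleanly derived from Jensen's formula --- but your version is a bit more conceptual and avoids having to recall or verify the specific table identity, at the cost of importing a theorem from complex analysis. One small slip to tidy: the root $1/r$ of $f$ lies outside the \emph{closed} unit disk iff $r<1$, not $r\le 1$; at $r=1$ the root sits on the boundary, which Jensen's formula in its basic form excludes. The conclusion still holds there (the integrand has an integrable logarithmic singularity and the identity extends by continuity, or by direct evaluation), but the boundary case deserves a word.
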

\vspace{0.3\baselineskip}
  \begin{proof}
    Without loss of generality, we assume that both $A$ and $B$ have
    non-negative real values, since $\theta$ is uniformly distributed in
    $[0, 2\pi)$. The expectation $\E_\theta\bigl(\log\bigl(\abs{B + A
    e^{j\theta}}^2\bigr)\bigr)$ can be directly calculated as follows:
    \begin{align}
       \MoveEqLeft \E_\theta\bigl(\log(\abs{B + A
       e^{j\theta}}^2)\bigr) \\ &= 
       \E_\theta\bigl(\log(A^2 + B^2 + 2 A B \cos(\theta))\bigr) \\
       &= \frac{1}{2\pi} \int_0^{2\pi} \log\bigl(A^2 + B^2 + 2 A B \cos(\theta) \bigr) \text{d}\theta\\
       &= \log\frac{A^2 + B^2 + \sqrt{(A^2+B^2)^2 -
       (2AB)^2}}{2} \label{eq:int}\\
       &= \log \bigl(\max\bigl\{ \abs{A}^2, \abs{B}^2 \bigr\}\bigr) 
    \end{align}%
    where \eqref{eq:int} is from the identity $$ \int_{0}^1 \log(a+b
  \cos(2\pi t)) \text{d} t = \log\frac{a+\sqrt{a^2 - b^2}}{2},\
  \forall\, a\ge b > 0.$$ 
   \end{proof}
  Now, we can finish the proof of \eqref{eq:tmp99} as follows:
  \begin{align}
    \MoveEqLeft{\E_{S_i \vert \hat{S}_i} \bigl( \log(1+\gv_i^\H \Km
    \gv_i) \bigr)} \\
    &= \E_{\tilde{\psiv}\vert\hat{S}_i}\biggl( \log\biggl(1 + \sum_{j=1}^m \lambda_j \Abs{\hat{\psi}_j
    + \tilde{\psi}_j} \biggr) \biggr)  
    \label{eq:tmpl1}
    \\
    &\ge \E_{\tilde{\psi}_1\vert\hat{S}_i}\!\bigl( \log(\lambda_1\Abs{\hat{\psi}_1
    + \tilde{\psi}_1}) \bigr)^+ 
    \label{eq:tmpl2}
    \\
    &\ge \Bigl( \E_{\tilde{\psi}_1\vert\hat{S}_i}\bigl( \log(\lambda_1\Abs{\hat{\psi}_1
    + \tilde{\psi}_1}) \bigr) \Bigr)^+ \label{eq:tmpl3} \\
    &\ge \Bigl( \E_{{\tilde{\psi}_1}\vert\hat{S}_i}\bigl(
    \log(\lambda_1\Abs{\tilde{\psi}_1}) \bigr) \Bigr)^+ \label{eq:tmpl4} \\
    &= \left( \log(2^{\gamma}\sigma^2\lambda_1) \right)^+ \label{eq:tmpl5} \\
    &\ge \log(1+2^{\gamma}\sigma^2\lambda_1) - 1 \label{eq:tmpl6} 
\end{align}
where in \eqref{eq:tmpl2}, $(x)^+$ means $\max\left\{ x, 0 \right\}$; 
\eqref{eq:tmpl3} is from the fact that moving the maximization
outside of the expectation does not increase the value; \eqref{eq:tmpl4}
is obtained by using the fact that $\tilde{\psi}_1$ is invariant under
complex scalar rotations and by applying Lemma~\ref{lemma:BA}~(averaging over the phase of
$\tilde{\psi}_1$); 
in \eqref{eq:tmpl5}, we define $\gamma \defeq \E_{\tilde{\psi}_1 \vert
\hat{S}_i} \Bigl( \log \frac{\Abs{\tilde{\psi}_1}}{\sigma^2} \Bigr) =
\E_{\tilde{S}_{i} \vert
\hat{S}_i} \Bigl( \log \frac{\Abs{\tilde{g}_{i,1}}}{\sigma^2}
\Bigr)$ with $\gamma>-\infty$ according to
Assumption~\ref{assumption:fading};
in \eqref{eq:tmpl6}, we apply the inequality $\bigl(\log(x)\bigr)^+ \ge \log(1+x) - 1$.

\subsection{Proof of Lemma~\ref{lemma:BC-CM}}
\label{app:BC-CM}

We describe the coding scheme in Lemma~\ref{lemma:BC-CM} as follows. 
\begin{itemize}
  \item Channel codebooks $\Xc_\text{c}, \Xc_{\text{p}1}, \Xc_{\text{p}2}$ of length $n$ and
    sizes $2^{n R_\text{c}}$, $2^{n R_{\text{p}1}}$, and $2^{n R_{\text{p}2}}$,
    respectively. Entries of these codebooks are generated 
    i.i.d. according to $\CN[0, \Lzero]$, $\CN[0, \Lpi]$, and $\CN[0,
    \Lpj]$, respectively, with $\Lzero,\Lpi,\Lpj\succeq0$ being $m\times m$ matrices that can be assumed to be diagonal without loss of generality. 
  \item Time-varying linear precoders that only depend on the estimate of
    the current state: 
    $$\Tpi_t, \Tpj_t, \Tmc_t: \hat{\Sc}_t \longmapsto \CC^{m\times m}.$$
  \item Coding: The commom message denoted by $W_\text{c}$ is coded
    in $\{\tilde{\xv}_{\text{c},t}\}_{t=1}^{n}\in\Xc_\text{c}$, precoded, 
    and then multicast to both users.
    Meanwhile, two private messages $W_{\text{p}1}$ and $W_{\text{p}2}$ for user~1
    and user~2, respectively, are coded in $\{\tilde{\uv}_{\text{p},t}
    \}_{t=1}^{n}\in\Xc_{\text{p}1}$ and $\{\tilde{\vv}_{\text{p},t}
    \}_{t=1}^{n}\in\Xc_{\text{p}2}$, respectively, precoded, and sent. The
    transmitted signal is 
    \begin{align}
      \xv_{t}  &= \Omegam_t \tilde{\xv}_{\text{c},t}+
      \Tpi_t \tilde{\uv}_{\text{p},t} + \Tpj_t \tilde{\vv}_{\text{p},t},
      \quad t=1,\ldots,n.
    \end{align}
\end{itemize}

Then, we can get the following achievable rate region. 
\begin{proposition}\label{prop:BC-CM}
  The achievable rate region of the two-user MISO broadcast channel with
  common message is the union of the rate triples $(R_\text{c}, R_{\text{p}1},
  R_{\text{p}2})$ with
  \begin{align}
    R_\text{c} &\defeq \min\Biggl\{ \E\Biggl( \log\Biggl( 1+ \frac{\hv^\H \Qm_\text{c} \hv}{ 1+
    \hv^\H (\Qm_{\text{p}1}+\Qm_{\text{p}2}) \hv } \Biggr)\Biggr) ,\\ 
    &\qquad \E[ \log\left( 1+
    \frac{\gv^\H \Qm_\text{c} \gv}{ 1+ \gv^\H (\Qm_{\text{p}1}+\Qm_{\text{p}2}) \gv }
    \right)] \Biggr\}, \label{eq:BC-CM-a}\\
    R_{\text{p}1} &\defeq \E\Biggl( \log\Biggl( 1+ \frac{\hv^\H \Qm_{\text{p}1}
    \hv}{ 1 + \hv^\H \Qm_{\text{p}2} \hv} \Biggr) \Biggr),
    \label{eq:BC-CM-b}\\ 
    R_{\text{p}2} &\defeq \E\Biggl( \log\Biggl( 1+ \frac{\gv^\H \Qm_{\text{p}2} \gv}{ 1 +
    \gv^\H \Qm_{\text{p}1} \gv} \Biggr)\Biggr), \label{eq:BC-CM-c}
  \end{align}%
  over all policies 
  \begin{align}
    Q(\hat{\Sm}) &\defeq \left\{\Qm_\text{c}, \Qm_{\text{p}1}, \Qm_{\text{p}2} \succeq 0:\
    \trace[\Qm_\text{c}+\Qm_{\text{p}1}+\Qm_{\text{p}2}] \le P \right\}
  \end{align}%
  that only depend on the estimate of the channels $\hat{\Sm}$.
\end{proposition}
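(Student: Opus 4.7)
The plan is to establish this as a routine superposition-coding achievability result for the ergodic Gaussian broadcast channel with common message, specialized to the situation where the inputs are time-varying through precoders driven by $\hat{\Sm}_t$ that is known at both receivers. Because each receiver observes $(\Sm_t,\hat{\Sm}_t)$, I would condition throughout on the joint state and treat each time slot as a memoryless Gaussian vector broadcast channel whose per-slot input covariances $\Qm_\text{c}\defeq\Tmc_t\Lzero\Tmc_t^\H$, $\Qm_{\text{p}1}\defeq\Tpi_t\Lpi\Tpi_t^\H$, and $\Qm_{\text{p}2}\defeq\Tpj_t\Lpj\Tpj_t^\H$ are deterministic functions of $\hat{\Sm}_t$; the average power constraint $\E(\norm{\xv_t}^2)\le P$ then matches the policy set $Q(\hat{\Sm})$ in the statement.

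Decoding proceeds in two stages at each receiver. In stage one, both users decode the common codeword while treating both private codewords as Gaussian interference; conditional on $(\Sm_t,\hat{\Sm}_t)$ the noise-plus-interference is circularly symmetric complex Gaussian, so the conditional AEP combined with the ergodic theorem gives an achievable rate equal to the ergodic mutual information, which at user~1 and user~2 are respectively the two arguments of the minimum in \eqref{eq:BC-CM-a}. Reliable recovery at both users therefore requires $R_\text{c}$ not to exceed their minimum. In stage two, each receiver subtracts the already-decoded common codeword and decodes its own private codeword while treating the other user's private codeword as Gaussian interference; the resulting ergodic mutual informations with Gaussian inputs evaluate to \eqref{eq:BC-CM-b} at user~1 and \eqref{eq:BC-CM-c} at user~2.

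The main technical point deserving care is that the joint-typicality sets used by the decoders depend on the realized state sequence, and the effective per-slot input distribution is not identically distributed across time. I would handle this by using conditional typicality with respect to $(\Sm^n,\hat{\Sm}^n)$, invoking stationarity and ergodicity of the state process from Assumption~\ref{assumption:fading} together with finiteness of the relevant conditional differential entropies (guaranteed by the isotropy and log-moment conditions) so that the usual random-coding union bound drives the error probability to zero as $n\to\infty$. Everything else — Gaussian optimality of the codebook distribution in evaluating the conditional mutual informations, union bounds over the two stages' error events, and the application of Fano's inequality to translate small error probability into the stated rate constraints — is entirely textbook and follows the standard broadcast-with-common-message template of Cover adapted to the ergodic fading setting.
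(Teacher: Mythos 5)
Your proposal is correct and follows essentially the same approach as the paper: superposition coding with two-stage successive decoding (common message first, treating both private signals as noise, then each private message after cancelling the common one while treating the other user's private signal as noise), with the same identification of the effective covariances $\Qm_\text{c}=\Tmc\Lzero\Tmc^\H$, $\Qm_{\text{p}1}=\Tpi\Lpi\Tpi^\H$, $\Qm_{\text{p}2}=\Tpj\Lpj\Tpj^\H$ from the precoders. The paper's proof is far terser — it simply states the decoding order and covariance identification and omits the conditional-typicality/ergodicity bookkeeping you supply — but the route is identical.
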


\begin{proof}
  The proof is straightforward. First, the common message is decoded by treating the private signals as
  noise. Then, after removing the decoded common signal, the private
  message is obtained by treating the interference as noises. The covariance matrices are such that 
  $\Qm_\text{c} = \Tmc \Lzero \Tmc^\H$, $\Qm_{\text{p}1} = \Tpi \Lpi \Tpi^\H$,
  $\Qm_{\text{p}2} = \Tpj \Lpj \Tpj^\H$.  Further details are omitted.  
\end{proof}

Setting $\Qm_\text{c}\sim P \Id$, $\Qm_{\text{p}1} \sim P^\alpha \Psimgp$, and
$\Qm_{\text{p}2} \sim P^\alpha \Psimhp$, Lemma~\ref{lemma:BC-CM} follows
immediately.

\subsection{Achievable rate region of the sum-DoF optimal scheme}
\label{app:exactRate}

Let us recall that the proposed scheme consists of two phases. 
In the following, we let $n_1$ and $n_2$ denote the length of
Phase~1 and Phase~2, in channel uses, respectively.  
The main ingredients in Phase~1 are:
\begin{itemize}
  \item Codebook generation: 
    \begin{itemize}
      \item Channel codebooks $\Xc_{\tilde{\uv}}$ of length $n_1$ and
        size $2^{n_1 \Rmimoi}$, $\Xc_{\tilde{\vv}}$ of length $n_1$ and
        size $2^{n_1 \Rmimoj}$. Entries of $\Xc_{\tilde{\uv}}$ and
        $\Xc_{\tilde{\vv}}$ are generated i.i.d.  according to $\CN[0,
        \Lu]$ and $\CN[0, \Lv]$, respectively. $\Lu, \Lv\succeq0$ are
        $m\times m$ diagonal matrices. 
      \item Source codebooks $\Cc_k$ of length $n_1$ and size
        $2^{n_1 R_{\eta_k}}$, $k=1,2$. Entries of $\Cc_1$ and
        $\Cc_2$ are generated i.i.d. according to $\CN\bigl(0,
        1-\tilde{D}_k\bigr)$, $\tilde{D}_k\le1$, $k=1,2$. 
    \end{itemize}
  \item Time-varying linear precoders that only depend on the estimate of
    the current state: 
    \begin{align}
    \Tu_t, \Tv_t: \hat{\Sc}_t \longmapsto \CC^{m\times m}.
    \end{align}%
  \item Coding in Phase~1: The codewords $\left\{ \tilde{\uv}_t
    \right\}_{t=1}^{n_1}$ and $\left\{ \tilde{\vv}_t
    \right\}_{t=1}^{n_1}$ are
    selected from $\Xc_{\tilde{\uv}}$ and $\Xc_{\tilde{\vv}}$, according
    to $W_{\text{mimo},1}$ and $W_{\text{mimo},2}$, 
    respectively. The transmitted signal is 
    \begin{align}
      \xv_t  
      &= \Tu_t \tilde{\uv}_t + \Tv_t \tilde{\vv}_t, \quad
      t=1,\ldots,n_1. 
    \end{align}%
  \item Quantization of the interferences $\eta_1$ and $\eta_2$: At the end of
    Phase~1, the transmitter knows $\{(\eta_{1,t},
    \eta_{2,t})\}_{t=1}^{n_1}$
    with $\eta_{1,t} \defeq \hv_t ^\H\vv_t \sim \CN(0,\sigma_{\eta_{1,t}}^2)$ and
    $\eta_{2,t} \defeq \gv_t^\H \uv_t \sim \CN(0,\sigma_{\eta_{2,t}}^2)$, for a
    given channel realization $\{\hv_t, \gv_t\}_{t=1}^{n_1}$. The codebook
    $\Cc_k$, $k=1,2$, is used to quantize the \emph{normalized}
    source $\left\{\frac{\eta_{k,t}}{\sigma_{\eta_{k,t}}}
    \right\}_{t=1}^{n_1}$ that is i.i.d. $\CN[0,1]$. The quantized
    outputs are represented in $n_1(R_{\eta_1} + R_{\eta_2})$ bits. 
\end{itemize}
In Phase~2, exactly the same codebooks and precoders as in
Appendix~\ref{app:BC-CM} are used, except that the length of the
codewords is $n_2$ instead of $n$.  The quantized interferences,
represented in $n_1(R_{\eta_1} + R_{\eta_2})$ bits and denoted by
$W_\text{c}$,
is coded in
$\{\tilde{\xv}_{\text{c},t}\}_{t=n_1+1}^{n_1+n_2}\in\Xc_\text{c}$,
precoded, and then multicast to both users.  Meanwhile, two private
messages $W_{\text{p}1}$ and $W_{\text{p}2}$ for user~1 and 2 are coded in
$\{\tilde{\uv}_{\text{p},t} \}_{t=n_1+1}^{n_1+n_2}\in\Xc_{\text{p}1}$ and
$\{\tilde{\vv}_{\text{p},t} \}_{t=n_1+1}^{n_1+n_2}\in\Xc_{\text{p}2}$, respectively,
precoded, and sent. The transmitted signal is 
\begin{align}
  \xv_{t}  &= \Omegam_t \tilde{\xv}_{\text{c},t}+ \Tpi_t
  \tilde{\uv}_{\text{p},t} + \Tpj_t \tilde{\vv}_{\text{p},t}, \quad
  t=n_1+1,\ldots,n_1+n_2.
\end{align}

For user $k$ to recover its original messages $(W_{\text{mimo},k},
W_{\text{p}k})$ correctly\footnote{Note that the
assumption on the ergodicity and the Markov chain~\eqref{eq:Markov}
makes the single-letter representation of the rates possible.}, when $n_1,n_2\to\infty$, it is enough to 
\begin{itemize}
  \item recover the message $(W_\text{c}, W_{\text{p}k})$, which is possible
    if
    \begin{align}
      n_1(R_{\eta_1}+R_{\eta_2}) \le n_2 R_\text{c},  \label{eq:n1n2} 
    \end{align}%
    and if the triple $(R_\text{c}, R_{\text{p}1}, R_{\text{p}2})$ lies in the region defined in Proposition~\ref{prop:BC-CM}; 
  \item reconstruct $\left\{ \hat{\eta}_{k,t} \right\}_{t=1}^{n_1}$,
    $k=1,2$, with 
    \begin{align}
      \eta_{k,t} &= \hat{\eta}_{k,t} + \Delta_{k,t}, \quad
      \Delta_{k,t} \sim \CN\bigl(0, \sigma^2_{\eta_{k,t}}
      \tilde{D}_k\bigr), 
    \end{align}%
    which is possible if 
    \begin{align}
      R_{\eta_k} > \log\left(\frac{1}{\tilde{D}_k}\right), \quad k=1,2; 
    \end{align}%
  \item then decode the message $W_{\text{mimo},k}$, 
  which is possible if
    \begin{align}
      \Rmimoi &< I(\tilde{U}; Y, \hat{\eta}_1, \hat{\eta}_2 \cond S,
      \hat{S}), \label{eq:Rmimo1}\\
      \Rmimoj &< I(\tilde{V}; Z, \hat{\eta}_1, \hat{\eta}_2 \cond S,
      \hat{S}). \label{eq:Rmimo2}
    \end{align}%
\end{itemize}
Putting all pieces together, we obtain the rate region of the proposed
scheme in the following.  
\begin{proposition}
  Let $(R_\text{c}, R_{\text{p}1}, R_{\text{p}2})$ be defined as
  in Proposition~\ref{prop:BC-CM} and let us define the compression rate~$R_{\eta_k}$ and MIMO rate as  
  \begin{align}
    R_{\eta_k} &\defeq   \log \frac{1}{\tilde{D}_k}, \quad k=1,2, \label{eq:Retav}\\
    \Rmimoi &\defeq \E\bigl( \log \det \left( \Id + \Dm_1 \Sm \Qu \Sm^\H
    \right) \bigr), \label{eq:Rmimoi}\\
    \Rmimoj &\defeq \E\bigl( \log \det \left( \Id + \Dm_2 \Sm \Qv \Sm^\H
    \right)\bigr), \label{eq:Rmimoj}\\
    \shortintertext{with}
    \Dm_1 &\defeq  \diag[\frac{1}{1+\hv^\H \Qv \hv\,\tilde{D}_1},\
    \frac{1-\tilde{D}_2}{\gv^\H \Qu \gv\, \tilde{D}_2}], \\
    \Dm_2 &\defeq  \diag[\frac{1-\tilde{D}_1}{\hv^\H \Qv \hv\, \tilde{D}_1},\ \frac{1}{1+\gv^\H \Qu \gv\,\tilde{D}_2}].
  \end{align}%
  Then, the achievable rate region of the proposed scheme is the union of the rate pairs
  $(R_1, R_2)$ with
  \begin{align}
    R_k &= \frac{ R_\text{c} \,{\Rmimok} + (R_{\eta,1} + R_{\eta,2})
    R_{\text{p}k} }{R_\text{c} +  R_{\eta,1} + R_{\eta,2}}, \quad k=1,2,  \label{eq:R_scheme2}
  \end{align}%
  over all policies $D(\hat{\Sm}) \defeq \{\tilde{D}_1, \tilde{D}_2:
  \ 0\le\tilde{D}_k\le1\}$ and 
  \begin{align}
    Q'(\hat{\Sm}) &\defeq \bigl\{ \Qu, \Qv, \Qm_\text{c}, \Qm_{\text{p}1},
    \Qm_{\text{p}2} \succeq 0:\\
    &\qquad \trace(\Qu+\Qv) \le P,\
    \trace(\Qm_\text{c}+\Qm_{\text{p}1}+\Qm_{\text{p}2}) \le P \bigr \}
  \end{align}%
  that only depend on the estimate of the channels. 
\end{proposition}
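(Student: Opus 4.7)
My plan is to analyze the scheme phase by phase and then optimize the ratio $n_2/n_1$. Over a block of $n_1+n_2$ channel uses, Phase~1 should deliver $n_1\Rmimok$ bits of $W_{\text{mimo},k}$ to user~$k$ and produce the interferences $\eta_1,\eta_2$ to be compressed; Phase~2 should simultaneously multicast the compressed bits as common message and inject $n_2 R_{\text{p}k}$ bits of fresh private information per user. The per-channel-use rate will then be $(n_1\Rmimok + n_2 R_{\text{p}k})/(n_1+n_2)$, and saturating the common-message constraint~\eqref{eq:n1n2} with $n_2/n_1 = (R_{\eta_1}+R_{\eta_2})/R_\text{c}$ eliminates the ratio to yield exactly~\eqref{eq:R_scheme2}.

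The two routine pieces are Phase~2 and the compression. Phase~2 follows directly from Proposition~\ref{prop:BC-CM}: with the same codebook and precoder construction restricted to $n_2$ channel uses, any $(R_\text{c}, R_{\text{p}1}, R_{\text{p}2})$ satisfying \eqref{eq:BC-CM-a}--\eqref{eq:BC-CM-c} is achievable as $n_2\to\infty$. For the compression, the stationarity of the fading process together with the Markov structure in Assumption~\ref{assumption:fading} makes the normalized sequences $\{\eta_{k,t}/\sigma_{\eta_{k,t}}\}_t$ conditionally i.i.d.~$\CN[0,1]$ given the side information, so the Gaussian rate--distortion theorem gives $R_{\eta_k} = \log(1/\tilde{D}_k)$ for target distortion $\tilde{D}_k$, as in~\eqref{eq:Retav}.

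The nontrivial step---and the one I expect to be the main obstacle---is establishing \eqref{eq:Rmimoi}--\eqref{eq:Rmimoj}. In \eqref{eq:MIMO1} the quantization noise $\Delta_k$ arises from a \emph{backward} test channel and is statistically dependent on $\uv$ through $\eta_2=\gv^\H\uv$, so the Gaussian MIMO capacity formula cannot be invoked directly. My plan is to rewrite the reconstruction via the equivalent \emph{forward} test channel $\hat{\eta}_k = (1-\tilde{D}_k)\eta_k + W_k$ with $W_k\sim\CN[0,\sigma_{\eta_k}^2\tilde{D}_k(1-\tilde{D}_k)]$ and $W_k\perp\eta_k$. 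After rescaling the second row of~\eqref{eq:MIMO1} by $1/(1-\tilde{D}_2)$---a mutual-information-preserving operation---user~1 sees $\tilde{\yv}' = \Sm\uv + \nv$, where $\nv$ is now zero-mean Gaussian, jointly independent of $\uv$, with diagonal covariance exactly $\Dm_1^{-1}$; a short variance calculation (combining the variances of $\tilde{D}_1\hv^\H\vv$, $-W_1$, and $\varepsilon$ in the first row, and of $W_2/(1-\tilde{D}_2)$ in the second) is what makes the entries of $\Dm_1$ appear in their stated form. Since $\Sm$ is almost-surely invertible and $\E[\uv\uv^\H]=\Qu$, the standard Gaussian MIMO capacity formula then yields $I(\uv;\tilde{\yv}'\cond \Sm,\hat{\Sm}) = \E[\log\det(\Id + \Dm_1 \Sm \Qu \Sm^\H)]$, which is~\eqref{eq:Rmimoi}; \eqref{eq:Rmimoj} follows by the symmetric argument. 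Once all the rate constraints \eqref{eq:n1n2}, $R_{\eta_k}\ge\log(1/\tilde{D}_k)$, and \eqref{eq:Rmimo1}--\eqref{eq:Rmimo2} are met under the stated choices, a final line of algebra reduces $(n_1\Rmimok+n_2 R_{\text{p}k})/(n_1+n_2)$ to~\eqref{eq:R_scheme2}.
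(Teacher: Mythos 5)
Your proposal is correct and follows essentially the same route as the paper: the ratio $n_2/n_1$ is fixed by saturating the common-message constraint, Phase~2 and the Gaussian rate--distortion argument are invoked directly, and the key technical step of handling the dependence of the quantization noise $\Delta_2$ on $\uv$ is resolved via the MMSE representation $\hat{\eta}_2 = (1-\tilde{D}_2)\gv^\H\uv + W_2$ with $W_2 \perp \uv$, followed by a rescaling that is exactly the paper's $a_{\text{mmse}}$ construction in \eqref{eq:tmp47}--\eqref{eq:tmp48}. The variance bookkeeping you describe reproduces the entries of $\Dm_1$ and $\Dm_2$ verbatim.
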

\begin{proof}
The average achievable rate for user $k$ is 
\begin{align}
  R_k &= \frac{n_1\Rmimok + n_2 R_{\text{p}k}}{ n_1 + n_2} \\
  &= \frac{\displaystyle \Rmimok + \frac{n_2}{n_1}  R_{\text{p}k}}{ \displaystyle 1 + \frac{n_2}{n_1}} \\
  &= \frac{ R_\text{c} \,{\Rmimok} + (R_{\eta,1} + R_{\eta,2})
  R_{\text{p}k} }{R_\text{c} +  R_{\eta,1} + R_{\eta,2}} 
\end{align}%
where the last equality holds by choosing $n_1$ and $n_2$ that equalize \eqref{eq:n1n2}. 
To see \eqref{eq:Rmimoi}, we write
\begin{align}
  \MoveEqLeft I(\tilde{U}; Y, \hat{\eta}_1, \hat{\eta}_2 \cond S = \Sm,
  \hat{S} = \hat{\Sm}) \\
  &= I(\tilde{U}; \hat{\eta}_1 ) + I(\tilde{U}; Y, \hat{\eta}_2 \cond \hat{\eta}_1) \label{eq:tmp43} \\
  &= I(\tilde{U}; Y, \hat{\eta}_2 \cond \hat{\eta}_1) \label{eq:tmp44} \\
  &= I(\tilde{U}; Y - \hat{\eta}_1, \hat{\eta}_2 \cond \hat{\eta}_1) \label{eq:tmp45} \\
  &= I(\tilde{U}; \hv^\H \tilde{U} + {\Delta_1} + E, \hat{\eta}_2) \label{eq:tmp46} \\
  &= I(\tilde{U}; \hv^\H \tilde{U} + {\Delta_1} + E, a_{\text{mmse}}
  \, \gv^\H \tilde{U} + E_{\text{mmse}}) \label{eq:tmp47} \\
  &= \log\det( \Id + \Dm_1 \Sm \Qu
  \Sm^\H  )\label{eq:tmp48} 
\end{align}%
where \eqref{eq:tmp43} is from the chain rule of mutual information;
\eqref{eq:tmp44} is from the fact that $\tilde{U}$ is independent of
$\eta_1$; \eqref{eq:tmp46} holds because $\hat{\eta}_1$ is
independent of all the other terms. Since $\eta_2 = \hat{\eta}_2 + {\Delta_2}$ with  $\hat{\eta}_2
\sim \CN\bigl( 0, \gv^\H \Qu \gv (1 - \tilde{D}_2) \bigr)$ and
${\Delta_2} \sim \CN\bigl( 0,\gv^\H
\Qu \gv\,\tilde{D}_2 \bigr)$ being additive Gaussian noise, we can
optimally ``estimate'' $\hat{\eta}_2$ from $\eta_2$ with a linear MMSE
estimator and get the ``backward channel'' model
\begin{equation}
  \hat{\eta}_2 = a_{\text{mmse}}\, \eta_2 + e_{\text{mmse}}
\end{equation}%
where $a_{\text{mmse}} \defeq 1-\tilde{D}_2$ corresponds to the scaling of the
linear MMSE estimation and the additive estimation noise
$e_{\text{mmse}}\sim \CN\bigl(0, a_{\text{mmse}}\, \gv^\H \Qu
\gv\,\tilde{D}_2 \bigr)$ is
independent of the ``input'' $\eta_2$ of the estimator. Thus, \eqref{eq:tmp48} follows as the mutual
information of an equivalent Gaussian MIMO channel with Gaussian input,
where $\Qu = \Tu \Lu \Tu^\H$ and $\Qv = \Tv \Lv \Tv^\H$. Note that in
the right hand sides of the above equalities, we have omitted the
conditioning on $\bigl\{S = \Sm, \hat{S} = \hat{\Sm}\bigr\}$ for convenience of
presentation.  Finally, \eqref{eq:Rmimoi} follows from \eqref{eq:Rmimo1} and
\eqref{eq:tmp48}. Due to the symmetry, \eqref{eq:Rmimoj} is
straightforward. 
\end{proof}

Note that the optimization in \eqref{eq:R_scheme2} is not trivial and is
out of the scope of this paper. Instead of finding the exact rate, we
focus on the symmetric degrees of freedom of the scheme with $m=2$, by
fixing the following parameters:
\begin{align}
  \Qu &= \frac{P_1}{2} \Psimgp + \frac{P_2}{2} \Psimg, 
  \quad \Qv = \frac{P_1}{2} \Psimhp + \frac{P_2}{2} \Psimh, \\
  \Qm_\text{c} &= \frac{P_\text{c}}{2} \Id,\quad 
  \Qm_{\text{p}1} = \frac{P_\text{p}}{2} \Psimgp, \quad 
  \Qm_{\text{p}2} = \frac{P_\text{p}}{2} \Psimhp, \\
  \tilde{D}_1 &= \tilde{D}_2 = (P \sigma^2)^{-1} = P^{-(1-\alpha)}
  \label{eq:choice-distortion}
\end{align}%
where we recall that $\Psimg \defeq \displaystyle \frac{\hat{\gv}\hat{\gv}^\H}{\Norm{\hat{\gv}}}$ and
$\Psimgp$, $\Psimh$, and $\Psimhp$ are similarly defined;
the power allocations $(P_\text{c}, P_\text{p})$ and $(P_1, P_2)$ are specified by
\begin{flalign}
  &&P_\text{p} &= \hat{\alpha}\, \hat{\sigma}^{-2},& 
  P_\text{c} &= P - P_\text{p},& \\
  &&P_2 &= (1-\hat{\alpha}) \frac{P}{2} \hat{\sigma}^{2},& 
  P_1 &= P - P_2,&  
\end{flalign}%
with 
$\hat{\sigma}^2 \defeq \max\left\{ P^{-1}, \sigma^2 \right\}$
and
$\hat{\alpha} \defeq -\frac{\log\hat{\sigma}^2}{\log P}$. 
The interpretation of the choices on the covariance matrices has already
been given in Section~\ref{sec:proposed-scheme}. For the choices of the
distortions \eqref{eq:choice-distortion} and the power
allocations, the intuitions are as follows: 
\begin{itemize}
  \item The distortions $\tilde{D}_1$ and $\tilde{D}_2$ are such that the
    errors $\{\Delta_{k,t}\}$ after the reconstruction of $\eta_1$ and $\eta_2$ are at the noise level.  
  \item The transmit power of the private signals 
    scales as $P_\text{p} \sim P^{\alpha}$, while the received power at the
    unintended receiver scales as $P^{0}$, i.e., the noise level.
    Thus, the private signal does not incur any DoF loss for the
    unintended receiver.   
  \item The scaling factor $\hat{\alpha}$ ensures that $P_\text{p} = P$ and
    $P_\text{c} = 0$ when the estimation error is small, i.e., $\sigma^2 \le
    P^{-1}$ while leading to $P_\text{p} = 0$ and $P_\text{c} = P$ when
    the estimation error is high, i.e., $\sigma^2 = 1$. Similarly, with  
    $(1-\hat{\alpha})$, $P_1 = P$ and
    $P_2 = 0$ when the estimation error is small, while $P_1 = P_2 =
    \frac{P}{2}$ when the estimation error is high. 
\end{itemize}
It is readily shown that, with these choices, we have the high SNR
approximation of the rates
\begin{align}
  R_\text{c} &= (1-\alpha) \log P + O(1), \\
  R_{\text{p}k} &= \alpha \log P + O(1), \quad k=1,2, \\
  \Retav &= 2(1-\alpha) \log P + O(1), \\
  \Rmimok &= (2-\alpha) \log P + O(1), \quad k=1,2,
\end{align}%
from which we derive the symmetric DoF
  $d_\text{sym} = \frac{2+\alpha}{3}$.

\begin{biographynophoto}{Sheng Yang} (M'07) received the B.E. degree
  in electrical engineering from Jiaotong University, Shanghai, China,
  in 2001, and both the engineer degree and the M.Sc. degree in
  electrical engineering from \'Ecole Nationale Sup\'erieure des
  T\'el\'ecommunications (ENST), Paris, France, in 2004, respectively.
  From 2004 to 2007, he worked as teaching and research assistant in the
  Communications \& Electronics department in ENST. During the same
  period, he completed his Ph.D., graduating in 2007 from Universit\'e de
  Pierre et Marie Curie (Paris VI).  From October 2007 to November 2008,
  he was with Motorola Research Center in
  Gif-sur-Yvette, France, as a
  senior staff research engineer. Since December 2008, he has joined the
  Telecommunications department at SUPELEC where he is currently an
  assistant professor. His research interests include cooperative
  diversity schemes, wireless networks information theory, and
  coding/decoding techniques for multi-antenna communication systems.
\end{biographynophoto}

\begin{biographynophoto}{Mari Kobayashi} (M'06) received 
the B.E. degree in electrical engineering from
Keio University, Yokohama, Japan, in 1999 and the M.S. degree in mobile radio
and the Ph.D. degree from \'Ecole Nationale Sup\'erieure des T\'el\'ecommunications, Paris, France, in 2000 and 2005, respectively.
From November 2005 to March 2007, she was a Postdoctoral Researcher at
the Centre Tecnol\`ogic de Telecomunicacions de Catalunya, Barcelona, Spain.
Since May 2007, she has been an Assistant Professor at Sup\'elec, Gif-sur-Yvette,
France. Her current research interests include multiple-input-multiple-output
(MIMO) communication systems, multiuser communication theory.
\end{biographynophoto}

\begin{biographynophoto}{David Gesbert} (IEEE Fellow)
is Professor and Head of the Mobile Communications Department, EURECOM,
France. He obtained the Ph.D degree from Ecole Nationale Sup\'erieure
des T\'el\'ecommunications, France, in 1997. From 1997 to 1999, he has
been with the Information Systems Laboratory, Stanford University. In
1999, he was a founding engineer of Iospan Wireless Inc, San Jose, CA., a startup company pioneering MIMO-OFDM (now Intel). Between 2001 and 2003 he has been with the Department of Informatics, University of Oslo as an adjunct professor. D. Gesbert has published about 170 papers and several patents all in the area of signal processing, communications, and wireless networks.

 D. Gesbert was a co-editor of several special issues on wireless
 networks and communications theory, for JSAC (2003, 2007, 2009),
 EURASIP Journal on Applied Signal Processing (2004, 2007), Wireless
 Communications Magazine (2006). He served on the IEEE Signal Processing
 for Communications Technical Committee, 2003-2008.  He's an associate
 editor for IEEE Transactions on Wireless Communications and the EURASIP
 Journal on Wireless Communications and Networking. He authored or
 co-authored papers winning the 2004 IEEE Best Tutorial Paper Award
 (Communications Society) for a 2003 JSAC paper on MIMO systems, 2005
 Best Paper (Young Author) Award for Signal Proc. Society journals, and
 the Best Paper Award for the 2004 ACM MSWiM workshop. He co-authored
 the book ``Space time wireless communications: From parameter
 estimation to MIMO systems'', Cambridge Press, 2006.  
\end{biographynophoto}

\begin{biographynophoto}{Xinping Yi} (S'12) received his B.S. degree
   from Huazhong University of Science and Technology and M.Sc. degree
   from University of Electronic Science and Technology, China, both in
   Electrical Engineering. Currently, He is pursuing the Ph.D.~degree at
   Mobile Communication Department, EURECOM, Sophia Antipolis, France.
   From 2009 to 2011, he was a Research Engineer in Huawei Technologies,
   Shenzhen, China. His current research interests include multiuser
   information theory and signal processing.
\end{biographynophoto}

\end{document}